\documentclass{article}
\usepackage[english]{babel}
\usepackage{graphicx} % Required for inserting images
\usepackage{algorithm}
\usepackage[noend]{algpseudocode}
\usepackage{amsmath}
\usepackage{amsthm}
\usepackage{amssymb}
\usepackage{mathalfa}
\usepackage[letterpaper, portrait, margin=1.2in]{geometry}
\usepackage{marginnote}
\usepackage{thmtools}
\usepackage{hyperref}
\usepackage{cleveref}
\usepackage{relsize}
\usepackage{bbm}
\usepackage{multirow}
\usepackage{float}
\usepackage{xargs}    
\usepackage[pdftex,dvipsnames]{xcolor}  % Coloured text etc.
\usepackage{relsize}
\usepackage[normalem]{ulem}
\usepackage{pgfplots}
\usepgfplotslibrary{fillbetween}
\pgfplotsset{compat=1.18}
\usepackage{dsfont}

\makeatletter
\newcommand{\AlignFootnote}[1]{%
	\ifmeasuring@
	\else
	\iffirstchoice@
	\footnote{#1}%
	\fi
	\fi}
\makeatother

\declaretheorem[name=Theorem, numberwithin=section]{theorem}

\declaretheorem[name=Lemma, sibling=theorem]{lemma}
\declaretheorem[name=Claim, sibling=theorem]{claim}
\declaretheorem[name=Fact, sibling=theorem]{fact}

\declaretheorem[name=Remark, numbered=no]{remark}

\newcommand{\Cov}[0]{\textup{Cov}}
\newcommand{\floor}[1]{\lfloor #1 \rfloor}
\newcommand{\pnorm}[2]{\lVert #1 \rVert_{#2}}

\newcommand{\E}{\mathop{\mathbb{E}}}

\title{A $4.509$-Approximation Algorithm for Generalized Min Sum Set Cover}

\author{Amey Bhangale\thanks{Department of Computer Science and Engineering, University of California, Riverside. Supported by the Hellman Fellowship award and NSF CAREER award 2440882.}
    \and
Yezhou Zhang\thanks{Department of Computer Science and Engineering, University of California, Riverside. Supported by the Hellman Fellowship award and NSF CAREER award 2440882.}}

\date{}

\begin{document}

\maketitle

\begin{abstract}
We study the \emph{generalized min-sum set cover} (GMSSC) problem, where given a collection of hyperedges $E$ with arbitrary covering requirements $\{k_e \in \mathbb{Z}^+ : e \in E\}$, the objective is to find an ordering of the vertices that minimizes the total cover time of the hyperedges. A hyperedge $e$ is considered covered at the first time when $k_e$ of its vertices appear in the ordering.

We present a $4.509$-approximation algorithm for GMSSC, improving upon the previous best-known guarantee of $4.642$~\cite[SODA'21]{BansalBFT21}. Our approach retains the general LP-based framework of Bansal, Batra, Farhadi,
and Tetali~\cite{BansalBFT21} but provides an improved analysis that narrows the gap toward the lower bound of $4$-approximation assuming P$\neq$NP. Our analysis takes advantage of the constraints of the linear program in a nontrivial way, along with new lower-tail bounds for the sums of independent Bernoulli random variables, which could be of independent interest.
\end{abstract}

\section{Introduction}

In this paper, we study approximation algorithms for the following related problems.
\subparagraph*{Min Sum Set Cover Problem:} The min sum set cover problem (MSSC) takes a universe of $n$ elements, denoted as $[n]$, as input and a collection of it subsets. Given a permutation of all elements in $[n]$, a subset $S$ is considered covered at the earliest time an element from $S$ appears in the permutation. The goal of the min sum set cover problem is to find a permutation of $[n]$ that minimizes the sum of the cover time of all given subsets. Alternatively, this problem can be described on a hypergraph: given a hypergraph $G=(V,H)$ with $n$ vertices and a set of hyperedges $e\in H$. For a schedule (permutation) of $n$ vertices, the cover time is defined as the earliest time that one of $v\in e$ appears in the schedule. The goal is to find a schedule that minimizes the sum of the cover time of all hyperedges. This problem was formally defined in~\cite{FeigeLT02}, which also introduced a $4$-approximation greedy algorithm and a matching hardness result.

\subparagraph*{Generalized Min Sum Set Cover Problem:} The generalized min sum set cover problem (GMSSC), also known as the multiple intents re-ranking problem generalized MSSC by assigning each hyperedge an individual cover requirement $k_e\leq |e|$, such that the hyperedge is considered at the earliest time when $k_e$ of its vertices appear in the schedule. The problem was first introduced by Azar, Gamzu, and Yin in \cite{AzarGY09}, which also provided an $O(\log(\max{k_e}))$-approximation algorithm for the problem. Bansal, Gupta, and Krishnaswamy first provided a constant, $485$-approximation in \cite{BansalGK10} with a linear program relaxation, which included strengthening constraints known as KC (Knapsack Cover) inequalities from \cite{CarrFLP00}.  Skutella and Williamson later improved the approximation factor to $28$ in \cite{SkutellaW11} by applying the $\alpha$-point rounding algorithm to the LP relaxation, and Im, Sviridenko, and Zwaan further improved the factor to $12.4$ in \cite{ImSZ14}. Recently, Bansal, Batra, Farhadi, and Tetali gave a $4.642$-approximation in \cite{BansalBFT21} by adding a linear transformation, or the kernel, to the LP fractional optimal solution and rounding based on the adjusted solution. This remains the best-known approximation to date. The hardness of GMSSC is $4-\epsilon$, derived from MSSC in~\cite{FeigeLT02}.

\subparagraph*{Min Latency Set Cover Problem:} The min latency set cover problem, first formally studied in \cite{HassinL05}, is a special case of GMSSC where each hyperedge has a cover requirement $k_e=|e|$. It is a well-studied problem, also known as the special case of the classic precedence-constrained scheduling problem, and has many $2$-approximation algorithms (see, e.g, \cite{KargerSW10} for a survey). Bansal and Khot proved the Unique-Game hardness of this problem is $2-\epsilon$ in \cite{BansalK09}.

\subsection{Our Results}
Our main theorem improves the $4.624$-approximation for GMSSC given by Bansal, Batra, Farhadi, and Tetali in~\cite{BansalBFT21}.

\begin{restatable}{theorem}{main}
\label{thm:main}
    There exists a $4.509$-approximation algorithm for the generalized min sum set cover (GMSSC) problem. 
\end{restatable}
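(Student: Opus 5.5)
We follow the LP-based framework of Bansal, Batra, Farhadi and Tetali~\cite{BansalBFT21} and change only the analysis and the parameters. The plan has four stages: solve the time-indexed LP with knapsack-cover inequalities, pass it through a kernel transform, round by scaled $\alpha$-point style sampling, and charge each hyperedge's expected cover time to its LP cost.

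\textbf{The LP.} Variables $x_{v,t}$ indicate that vertex $v$ is scheduled by time $t$, and $z_{e,t}$ that hyperedge $e$ is covered by time $t$. For every $e$, every $t$, and every $A\subseteq e$ with $|A|<k_e$, we impose the knapsack-cover constraint
\[
\sum_{v\in e\setminus A} x_{v,t}\ \ge\ (k_e-|A|)\, z_{e,t}.
\]
We also impose $\sum_v x_{v,t}\le t$, and the objective is $\sum_e\sum_t (1-z_{e,t})$. This LP can be solved to within $1+\epsilon$ using the standard separation oracle for KC inequalities.

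\textbf{Kernel and rounding.} Following~\cite{BansalBFT21}, apply a kernel (a linear transformation in time) to the fractional solution. Then, for geometrically increasing times $t_j=\lceil c\,\beta^j\rceil$, independently place each vertex $v$ into the $j$-th batch with probability $\min(1,\lambda\, x_{v,t_j})$. The batches are concatenated in order, with duplicates removed, giving the output permutation. The constants $\lambda,\beta$ and the kernel shape become parameters to optimize, with a random offset of the geometric grid.

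\textbf{Charging.} Fix a hyperedge $e$ and let $t^*_e$ be its fractional "half-cover" time, or more precisely its whole profile $z_{e,\cdot}$. Bound the probability that $e$ is still uncovered after batch $j$. For $t_j$ well beyond the point where $z_{e,t_j}$ is large, the KC constraint with $A$ equal to the already-chosen vertices guarantees that the expected number of newly chosen vertices of $e$ is at least $\lambda(k_e-|A|)z_{e,t_j}$. Failure then requires a sum of independent Bernoullis to fall below a fraction of its mean.

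The improvement over~\cite{BansalBFT21} comes from a sharper lower-tail bound. For a sum $X$ of independent Bernoullis with mean $\mu\ge \lambda m$, where $m=k_e-|A|$, we prove a bound on $\Pr[X<m]$ that is better than the Chernoff bound. The key idea is to exploit that the worst case is not Poisson: because each $p_v\le 1$ and LP feasibility gives the constraints for every $A$ simultaneously, the mass cannot concentrate on few vertices. We would establish this by a convexity/majorization argument showing the extremal distribution has all equal $p_v$ (or some equal to $1$), and then reduce to bounding binomial tails explicitly.

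The expected cover time of $e$ is then bounded by
\[
\sum_j |\text{batch}_{\le j}|\cdot\Pr[e \text{ uncovered after } j-1],
\]
where the batch sizes are at most $\lambda t_j$ in expectation. This sum is compared to $\int (1-z_{e,t})\,dt$. Finally, numerically optimize $\lambda,\beta$ and the kernel to get a ratio of $4.509$.

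\textbf{Main obstacle.} We expect the hardest step to be the uniform lower-tail inequality over all $m=k_e$ values. Small $m$ is the MSSC-like regime, where the ratio near $4$ is tight and there is no slack. Large $m$ gives concentration. We would first attempt a case split: prove exact analytic bounds for $m\le 3$, and use a Bernstein/Chernoff-type bound with the LP-derived structure for larger $m$, verifying the combined ratio numerically.
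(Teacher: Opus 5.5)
There is a genuine gap: nothing in your plan actually produces the constant. The number $4.509$ is the value at $\beta=2.043$ of $\beta e^{1/\beta}\bigl(1+\frac{e^{-1}}{\beta-1}\bigr)$. That is the bound for one specific scheme: $\alpha$-point rounding of $z=Kx$ with $K(t,t')=\frac{\beta}{t}\mathds{1}[t'\le t]$, a single threshold $\alpha_v$ per vertex, and random tie-breaking.

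What you describe is a different algorithm. It resamples independently in geometric batches, in the style of the earlier constant-factor algorithms rather than \cite{BansalBFT21}, and the kernel plays no defined role since you sample with $\min(1,\lambda x_{v,t_j})$. ``Numerically optimize $\lambda,\beta$ and the kernel to get $4.509$'' is therefore an assertion, not an analysis, and there is no reason your scheme's optimum coincides with the paper's number.

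As written, your scheme also has a tension you never resolve. If $\lambda<1$, a vertex the LP has fully scheduled is picked in each later batch only with probability $\lambda$. So for an edge with $k_e=|e|$, the number of batches until all its vertices are picked grows like $\log k_e$, and with geometric batch times your charging bound is unbounded in $k_e$. If $\lambda\ge 1$ with a fixed grid ratio, you pay discretization and prefix-charging losses that you never quantify. The paper's analysis relies on what $\alpha$-point rounding provides:
a vertex with $z_{v,<t}\ge 1$ is surely scheduled by $t$, so $A_e(t)$ is deterministic;
every other vertex is scheduled by $t$ independently with probability exactly $z_{v,<t}$;
and $\sum_v z_{v,t}\le\beta$ gives $\E[\Cov_{\sigma}(e)]\le\beta c_z(e)$.

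Second, the obstacle you single out is misidentified. A tail bound depending only on $\mu/m$ is what \cite{BansalBFT21} already used, and it is exactly what yields $4.642$. Your majorization-to-equal-$p_v$ step is Hoeffding's lemma (Lemma~\ref{lemma:Hoeffdingbound}), which is the starting point rather than the new ingredient. Nor is large $m$ easy by concentration near the threshold. At mean exactly $m$, $\Pr[\mathrm{Poisson}(m)\le m-1]$ is about $e^{-1}$ for $m=1$ but tends to $1/2$ as $m$ grows, mirrored in $P^1(1)<P^2(1)<\cdots$.

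If you instead prove a separate bound for each $m$ (your case split), you hit the real difficulty. The residual requirement $k_e(t)$ (your $m=k_e-|A|$) drops over time, and the upper envelope of the per-$m$ bounds is worse than $e^{-x}$. So a case split alone cannot reach the MSSC-type bound.

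The paper resolves this by combining the $k$-dependent bound $P^k(x)=e^{-\sqrt{k}(x-(k-1)/k)}$ with slack extracted from further KC inequalities:
\begin{itemize}
\item It applies the KC inequality with $S=Q_{i+1}$ at the time $t_i$ when the $(k_e-i)$-th vertex reaches $z\ge 1$. This gives $\sum_{v\in e\setminus Q_i}z_{v,<t}\ge i\,z_{e,<t}+\sum_{j=i}^{k_e-1}\delta_j$ for $t\ge t_i$, with $\delta_j=z_{e,<t_j}-1$.
\item Claim~\ref{claim:max_integral} then shows that each drop in $k_e(t)$ only lowers the integral, so the worst case is $P^1(x)=e^{-x}$.
\item The convexity reduction of Lemma~\ref{lemma:max_point} shows the worst LP profile is a point mass at $u=t^*e^{-1/\beta}$.
\end{itemize}
Together these yield $c_z(e)\le e^{1/\beta}\bigl(1+\frac{e^{-1}}{\beta-1}\bigr)c_x(e)$. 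Your plan contains neither this slack-extraction step nor a substitute for it.
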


In~\cite{BansalBFT21}, the authors gave an approximation algorithm for GMSSC and showed that the algorithm achieves an approximation ratio of $4.624$. They also remarked that the numerical analysis suggests that the approximation factor is no more than $4.5232$, even though they were unable to prove this analytically. Finally, they showed that getting better than a $4.509$ factor approximation has a natural bottleneck for their approach. Towards proving the main theorem above, we provide a tighter analysis of their algorithm and show that it indeed achieves a $4.509$ approximation ratio. Our analysis leverages the linear program’s constraints in a non-trivial manner along with new lower-tail bounds for the sums of independent Bernoulli random variables, which could be of independent interest (see Section~\ref{sec:techniques} for more details).

Our second theorem gives a $2$-approximation algorithm for the min latency set cover problem using the framework of Bansal, Batra, Farhadi, and Tetali \cite{BansalBFT21}, improving upon the $e = 2.718\ldots$ approximation algorithm given in the same paper. 
\begin{restatable}{theorem}{second}
\label{thm:main2}
    There exists a $2$-approximation algorithm for the min latency set cover problem. 
\end{restatable}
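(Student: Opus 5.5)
We prove Theorem~\ref{thm:main2} by instantiating the LP-based framework of \cite{BansalBFT21} for the special case $k_e=|e|$. Here the time-indexed LP simplifies a lot, because an edge is covered exactly when its \emph{last} vertex is scheduled. We use the standard relaxation with variables $x_{v,t}\ge 0$, meaning vertex $v$ is scheduled at time $t$. The constraints are $\sum_t x_{v,t}=1$ for every $v$ and $\sum_v x_{v,t}\le 1$ for every $t$. For each edge $e$ we take a cover-time variable $C_e$ that is at least the fractional completion time of every vertex of $e$, namely $C_e\ge C_v:=\sum_t t\,x_{v,t}$ for all $v\in e$. This gives $\mathrm{LP}=\sum_e C_e\le \mathrm{OPT}$. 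In the min latency setting these constraints suffice, so the KC inequalities are not needed. Equivalently, the problem is single-machine scheduling of unit jobs with precedence constraints toward ``edge-jobs'' of weight one and processing time zero. With that view in mind, the plan is a deterministic $\alpha$-point style rounding with $\alpha=1/2$, presented through the kernel framework of \cite{BansalBFT21}.

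\textbf{Step 1 (rounding).} For each vertex $v$, define its half-point $\tau_v=\min\{t:\sum_{s\le t}x_{v,s}\ge 1/2\}$. Order the vertices by $\tau_v$, breaking ties arbitrarily. Since $x_{v,\cdot}$ is a probability distribution, Markov's inequality gives $\tau_v\le 2C_v$. This matches choosing the kernel in \cite{BansalBFT21} as a point mass at $1/2$ rather than the distribution that yields the factor $e$.

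\textbf{Step 2 (bounding positions).} Let $v^*$ be the vertex of $e$ with the largest $\tau$. Then $e$ is covered exactly when $v^*$ is placed, at position $p(v^*)\le |\{u:\tau_u\le\tau_{v^*}\}|$. Every such $u$ has at least half of its mass in $[1,\tau_{v^*}]$, while the capacity constraint allows total mass at most $\tau_{v^*}$ there. Hence $|\{u:\tau_u\le \tau_{v^*}\}|\le 2\tau_{v^*}$.

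\textbf{Step 3 (the factor-two issue).} The naive chain $p(v^*)\le 2\tau_{v^*}\le 4C_{v^*}\le 4C_e$ loses a factor of $4$, and recovering $2$ is the main obstacle. I would tighten Step 2 and replace the half-point by an averaged quantity. For each $v$, let $M_v$ be the time by which half of $v$'s mass has arrived, defined fractionally. Charge the position of $v^*$ as follows. Each $u$ scheduled before $v^*$ contributes at least $1/2$ unit of mass before $M_{v^*}$, and the capacity constraint bounds the total by $M_{v^*}$. So $p(v^*)\le 2M_{v^*}$.

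It then remains to show $M_v\le C_v+\tfrac12$ up to lower-order terms. This fails for a general distribution, since the median can exceed the mean. The fix is to choose the LP solution, or transform it by the kernel, so that each $x_{v,\cdot}$ is ``front-loaded''. The cleanest route is to use the completion-time LP with parallel inequalities $\sum_{v\in S}C_v\ge |S|(|S|+1)/2$ instead of the time-indexed one. Order the vertices by $C_v$; the classical argument of Schulz (Hall et al.) then shows the $j$-th vertex $v_j$ satisfies $j\le 2C_{v_j}$. Indeed, applying the inequality to $S=\{v_1,\dots,v_j\}$ gives $jC_{v_j}\ge\sum_{i\le j}C_{v_i}\ge j(j+1)/2$. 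The cover time of $e$ is the position of its vertex with the largest $C$, which is at most $2\max_{v\in e}C_v\le 2C_e$. Summing over edges gives $\mathrm{ALG}\le 2\,\mathrm{LP}\le 2\,\mathrm{OPT}$.

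The remaining work is to verify that this LP is a valid relaxation, and that it can be solved in polynomial time via the submodular-polytope separation oracle, or encoded within the time-indexed framework of \cite{BansalBFT21}.
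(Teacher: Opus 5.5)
Your final argument is correct, but it takes a different route from the paper. You leave the kernel framework entirely and use the classical Schulz/Hall--Schulz--Shmoys--Wein argument. You list the vertices by LP completion time $C_v$, use the parallel inequalities to get $j\le 2C_{v_j}-1$ for the $j$-th vertex, and charge the last vertex of $e$ in this order to $C_e\ge\max_{v\in e}C_v$.

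The paper instead stays inside the framework of \cite{BansalBFT21}:
\begin{itemize}
\item It applies the kernel $K(t,t')=\alpha t'/(t(t+1))$ to the time-indexed KC solution.
\item It notes that the KC inequalities become $z_{v,<t}\ge z_{e,<t}$, so all of $e$ is placed in $\tau$ by $t_e$.
\item It bounds $t_e\le t^*\le\frac{\alpha}{\alpha-1}c_x(e)$ by an extreme-point argument on the constraint $\alpha\sum_{t\le t^*}a_t(t^*-t)/t^*=1$.
\item It uses $K\mathbf{1}=\alpha/2$ to bound the expected position in $\sigma$ by $(\alpha/2)t_e$.
\end{itemize}
Setting $\alpha=2$ balances the two factors. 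Your route is shorter, deterministic, and needs neither a kernel nor random tie-breaking, but it only reproduces the known $2$-approximation that the paper already cites. The paper's route is what carries its actual message: the same LP-plus-kernel $\alpha$-point rounding used for GMSSC is tight for min latency set cover (cf.\ the closing remark on universal kernels).

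For the write-up, drop Steps 1--2 and the $M_v$ detour. They are dead ends, and the ``point-mass kernel'' analogy is not accurate.

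Your ``remaining work'' is also immediate, and no change of LP is needed. The time-indexed LP you started with already implies the parallel inequalities for $C_v=\sum_t t\,x_{v,t}$. For any $S$, the masses $m_t=\sum_{v\in S}x_{v,t}\in[0,1]$ sum to $|S|$, so $\sum_{v\in S}C_v=\sum_t t\,m_t\ge 1+2+\cdots+|S|$. Validity of the relaxation is just that positions in a permutation are distinct integers.

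One small point: with ties in $C$, the cover time of $e$ is the position of the last vertex of $e$ in your order, not necessarily that of ``the'' vertex with largest $C$. Your bound $j\le 2C_{v_j}$ applies to that vertex anyway.
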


Recall that a $2$-approximation algorithms were already known before~\cite{KargerSW10}. We designed a new kernel which, when plugged into the framework of~\cite{BansalBFT21}, gives the optimal approximation algorithm for the min latency set cover problem.

In ~\cite{BansalBFT21}, the authors already showed that their framework gave an optimal $4$-approximation algorithm for MSSC (all $k_e = 1$). Theorem~\ref{thm:main2} now shows that the framework provides an optimal approximation algorithm (assuming the Unique Games Conjecture) for min sum latency set cover (all $k_e = |e|$). It would be interesting to see if it gives an optimal approximation algorithm for GMSSC (arbitrary $k_e$). 

\subsection{Techniques}
\label{sec:techniques}
This section provides an overview of the LP-rounding based algorithmic techniques employed in this paper.

The foundation of our approach lies in the work of \cite{BansalGK10}, which defined a commonly used linear program (LP) for MSSC and GMSSC. This LP returns a fractional solution $x$ where $x_{v,t}$ is an indicator of whether vertex $v$ is scheduled at time $t$. The core LP constraint is given by
\[\sum_{v\in e}\sum_{t' < t}x_{v,<t}\geq \sum_{t'< t}x_{e,t'},\quad\forall e,t,\]
where $\sum_{t'<t}x_{e,t'}$ indicates the fraction of edge $e$ covered at time $t$. However, a simple randomized rounding algorithm, rounding based on $\sum_{t'\leq t}x_{v,t'}$, proves ineffective even for MSSC. This is because if the LP fractionally covers a hyperedge of $n$ vertices by scheduling each vertex to the extent of $1/n$ at time $t$, this approach leaves this hyperedge uncovered with probability $(1-1/n)^n\approx 1/e\gg 0$ for any time after $t$. For a long period after $t$, the LP solver may consider this hyperedge covered, but yet no vertex might be scheduled in an integral solution, leading to an arbitrarily large approximation ratio. To overcome this, a linear transform, or kernel $K$, was introduced and applied to the fractional optimal solution $x$ from the LP in \cite{BansalBFT21}.
% \abnote{it would be useful to state the LP variables and their intuitive meaning}. 
The standard $\alpha$-point randomized rounding was then applied to $Z=Kx$ to obtain a feasible integral solution. Using a kernel $K(t,t')=\beta/t~(t'\leq t)$, for $\beta=2$, in \cite{BansalBFT21}, the authors achieved a $4$-approximation for MSSC, which corresponded to the known hardness result. This was accomplished by analyzing the expected cover time of a hyperedge in the fractional optimal solution $x$, denoted $c_x(e)$, and its corresponding expected time in the integral solution $z$, denoted $c_z(e)$, along with a random tie-breaking rule for the final integral solution. While $c_x(e)$ was obtained from the optimal fractional solution, $c_z(e)$ was related to the sum of random Bernoulli variables. The approximation factor was then formulated as a convex optimization problem involving the ratio $c_z(e)/c_x(e)$. Through convex optimization techniques, it was proven that the worst-case scenario occurs when the LP solver assigns the extent $1/n$ to all vertices. Nevertheless, the $\beta/t$ kernel effectively increases the probability of scheduling $v\in e$ over time, leading to a feasible solution.

For the GMSSC problem, the best known hardness is $4$, and all previous approximations follow the extended analysis of MSSC. In~\cite{BansalGK10}, an unbounded integrality gap is demonstrated for the natural extended LP for GMSSC, and a set of strengthening constraints, KC (Knapsack Cover) inequalities from \cite{CarrFLP00}, was introduced for GMSSC.
\[\sum_{v\in e\setminus S}\sum_{t'<t}x_{v,t'}\geq (k_e-|S|)\sum_{t'< t}x_{e,t'},\quad \forall e,t,S\subseteq e.\]
The KC inequalities eliminated the unbounded integrality gap and enabled the proof of several constant-factor approximation algorithms. However, applying the same algorithm and analysis to GMSSC in \cite{BansalBFT21} yielded only a $4.642$-approximation factor. 
This result left a gap between GMSSC and MSSC, and the method to eliminate this gap remained unknown. For instance, if $k_e-1$ vertices are already scheduled in the integral solution (let $S$ be the set of these vertices), the above KC inequality effectively transforms into a constraint in the LP for MSSC. This suggests that GMSSC could potentially be reduced to MSSC after a certain time.

Our paper aims primarily to eliminate this gap between GMSSC and MSSC after a certain time in the scheduling. To this end, we address a key analytical challenge in GMSSC: \textit{for a hyperedge $e$, the remaining cover requirement at time $t$, $k_e(t)$, potentially decreases over time as vertices are getting scheduled}. This dynamic nature makes the probability of covering $e$ difficult to accurately express with a single function. We establish a novel upper bound $P^{k_e(t)}$ on the probability that a hyperedge $e$ remains uncovered at time $t$, explicitly involving $k_e(t)$. Furthermore, by leveraging multiple KC inequalities, we demonstrate that the core KC inequality (after applying the kernel) $\sum_{v\in e\setminus A_e(t)}z_{v,<t}\geq k_e(t)z_{e,<t}$ can be more accurately expressed as:
\[\sum_{v\in e\setminus A_e(t)}z_{v,<t}\approx k_e(t)z_{e,<t}+\delta(t),\]
where $A_e(t)$ is the set of vertices guaranteed to be scheduled in the integral solution before time $t$, $\delta(t)$ is a crucial gap extracted from other KC inequalities.

Consequently, we derive a more precise expression for $c_z(e)$, the expected cover time in the integral solution. More precisely, since a vertex $v\in e\setminus A_e(t)$ is scheduled at time $t$ with probability $z_{v,<t}$ in the rounding algorithm, the probability that $e$ is not covered at time $t$ is precisely the probability that at most $k_e(t)-1$ vertices from $e\setminus A_e(t)$ are scheduled before time $t$. If we let $P^k(x)$ to denote an upper bound on $\Pr[S\leq k-1]$ if $kx=\E[S]$ where $S$ is the sum of Bernoulli random variables, then
\begin{equation}
\label{eq:better_bound}    
c_z(e)\approx t_e+\sum_{t>t_e}P^{k_e(t)}(z_{e,<t}+\delta(t)/k_e(t))),
\end{equation}
where $t_e$ is a special time extracted from the LP solution. Compared to previous results, where $c_z(e) \leq t_e+\sum_{t>t_e}P_c(z_{e,<t})$ (here, the quantity $P_c(z_{e,<t})$ is an upper bound on the probability if $e$ is not covered at time $t$, independent of $k_e(t)$, proved in \cite{BansalBFT21}), our expression more precisely reflects the probability of covering the hyperedge assuming that $k_e(t)$ is known. This allows for a more fine-grained analysis in terms of the change of $k_e(t)$ over time. Accordingly, we present several extended convex optimization results to argue that the worst-case behavior of GMSSC is similar to that of MSSC. Crucially, the existence of $\delta(t)$ and the properties of $P^{k_e(t)}$ allow us to prove that the worst-case scenario always involves $k_e(t)$ reducing to $1$ prior to $t_e$. On a more technical side, we give a refined upper bound on $P^k(x)$ in Theorem~\ref{thm:newtailbound}, and use these bounds to relate various terms in the expression (\ref{eq:better_bound}) for $c_z(e)$ as $t$ increases. This more refined analysis yielded a $4.509$-approximation for GMSSC.

Beyond GMSSC, we also study the min latency set cover problem. While $2$-approximation algorithms are already known for this problem, we explore whether this LP framework can yield a matching approximation. In the min latency set cover problem, the LP solver needs to schedule all vertices of a hyperedge before it is considered covered. However, a simple randomized algorithm again proves insufficient. If the LP assigns an extent of $1-1/n$ to all vertices of a hyperedge, making the edge almost scheduled fractionally, the simple randomized algorithm covers this hyperedge at time $t$ with probability $(1-1/n)^n\approx 1/e$, which is much less than $1$. Consequently, we adopt a kernel:
\[K(t,t')=\frac{\alpha t'}{t(t+1)},\]
which retains the property of increasing the probability of scheduling a vertex over time, but at a faster rate, as a divergent kernel is not required here. This approach, using a similar analysis, yields a 2-approximation. Although this result for the min latency set cover is not novel, it highlights the robustness and power of this LP-rounding framework and offers insights that could potentially lead to further improvements in the approximation factor for GMSSC.

\subsection{Organization}
Section~\ref{section:GMSSC} presents our primary analysis regarding the Generalized Minimum Sum Set Cover (GMSSC) problem. Specifically, we begin by briefly introducing the linear program, its relaxation, and the rounding algorithm from \cite{BansalBFT21} in Sections~\ref{section:LP} and~\ref{section:rounding}, respectively. Our main contribution, detailed in Section~\ref{section:analysis}, is the achievement of a $4.509$-approximation factor within this LP-rounding based framework. 

Section~\ref{section:MLC} focuses on the min latency set cover problem, where we prove that this algorithm achieves a $2$-approximation using a distinct kernel compared to the MSSC/GMSSC approach. Finally, Section~\ref{section:tailbound} establishes a new tail bound on the sum of Bernoulli random variables, which is crucial for the analysis presented in Section~\ref{section:analysis}.

\section{Generalized Min Sum Set Cover}
\label{section:GMSSC}
\subsection{Linear Program}
\label{section:LP}
Assume that time is discrete and that time $t\in\{1,2,\dots\}$, refers to the time interval $(t-1,t]$. $e$ refers to an edge and a hyperedge interchangeably. For each vertex $v$ and time $t$, there is a variable $x_{v,t}$, a 0-1 indicator of whether $v$ is assigned at time $t$. For each edge $e$ and time $t$, there is a variable $u_{e,t}$ intended to be $1$ if $e$ is uncovered at the beginning of $t$. The following LP program and relaxation for MSSC was introduced in~\cite{BansalGK10}:
\begin{alignat}{2}
    \textrm{Minimize}\quad&\sum_{e,t}u_{e,t}\quad\text{s.t.} \notag\\
    &\sum_v x_{v,t}\leq 1,\quad & \forall t, \label{eqn:cons2}\\
    &u_{e,t}+\sum_{v\in e}\sum_{t'<t}x_{v,t'}\geq 1,\quad & \forall e,t, \label{eqn:cons3}\\
    \quad&u_{e,t},x_{v,t}\geq 0,\quad & \forall e,v,t. \label{eqn:cons1}
\end{alignat}

The constraint (\ref{eqn:cons2}) ensures that at most one vertex is scheduled simultaneously. The constraint (\ref{eqn:cons3}) guarantees that $u_{e,t}$ can be set to 0 only if some $v\in e$ is scheduled strictly before $t$.
%\subsubsection{LP Relaxation for GMSSC}

For GMSSC, the demands $k_e$ are arbitrary, so the natural extension of (\ref{eqn:cons3}),
\[k_eu_{e,t}+\sum_{v\in e}\sum_{t'<t}x_{v,t'}\geq k_e,\quad \forall e,t,\]
becomes extremely weak. For example, if a hyperedge has $100$ vertices to cover, and the LP solver assigns $\sum_{t'<t}x_{v,t'}=1$ to $99$ vertices but sets all other vertices to $0$. In such a case, the natural extension of (\ref{eqn:cons3}) allows the LP solver to set $u_{e,t}=0.01$, and $e$ is almost covered in the fractional solution. However, since there are only $99$ vertices with $\sum_{t'<t}x_{v,t'}\neq0$, we are unable to obtain a valid integral solution in any case. Consequently, the Knapsack Cover (KC) Inequalities were introduced to replace (\ref{eqn:cons3}) and strengthen the linear program in~\cite{BansalGK10}.
\begin{align}
    (k_e-|S|)u_{e,t}+\sum_{v\in e\setminus S}\sum_{t'<t}x_{v,t'}\geq(k_e-|S|),\quad\forall e,t,S\subseteq e,|S|<k_e.
\label{eqn:cons4}
\end{align}
The KC inequalities guarantee that the LP solver can only consider this hyperedge $e$ covered when $k_e$ many vertices are almost scheduled.

\subsection{Rounding algorithm}
\label{section:rounding}
Once the LP solver provides the optimal fractional solution $x$, we shall apply a rounding algorithm to $x$ to generate the integral solution. However, directly rounding $\sum_{t'<t}x_{v,t'}$ may not guarantee a feasible integral solution. For example, when $k_e=1$, the LP solver can set $\sum_{t'\leq t}x_{v,t'}=1/n\rightarrow 0$ for all $v\in e$. The challenge is that we may not schedule any vertex for a long period in the rounding procedure, since the probability is negligible. Therefore, we apply an additional transformation to $x$ to ensure that the integral solution gives a feasible schedule after the LP solver covers the edge. For all $v$ and $t$, we apply the following transformation,
    \begin{align}
    z_{v,t} = \sum_{t'} K(t,t')x_{v,t'}.
    \end{align}
    
The new solution $z=Kx$ will no longer satisfy the constraints in the original linear program, but $z$ can be rounded to a feasible schedule if $K$ is chosen properly. The rounding algorithm is as follows:

\begin{algorithm}[ht]
    \caption{ Rounding Algorithm from \cite{BansalBFT21}}
    \begin{algorithmic}
        \Procedure{KERNEL $\alpha$-point Rounding}{$H=(V,E),k:E\rightarrow\mathbb{N}$}
        \State $x$ $\gets$ An optimum fractional schedule for $H,k$
        \State $z_v \gets Kx_v\quad v\in V$
        \For{$v\in V$}
            \State $\alpha_v\sim$ $\textit{uniform}[0,1]$
            \State $\tau_{v,t}\gets\mathds{1}\text{[$t$=the earliest time for which $\sum_{t'<t}z_{v,t'}\geq\alpha_v], \forall t$}$.
        \EndFor
        \Return ordering $\sigma$: Scheduling vertices according to $\tau$, breaking ties at random.
        \EndProcedure
    \end{algorithmic}
    \label{alg:round}
\end{algorithm}

After the rounding procedure, the approximation algorithm gives the desired integral solution $\sigma$.

\subsection{Analysis}
\label{section:analysis}
For a vector $x_v$, define 
\[x_{v,<t}:=\sum_{t'<t}x_{v,t'}.\]
For the GMSSC problem, we select the kernel,
\begin{align}
    K(t,t')=\frac{\beta}{t}\cdot \mathds{1}[t'\leq t].
\end{align}
Note that $\beta=2.043$ in our final setting. Correspondingly,
\begin{align}
\label{eq:zvt_to_xvt}
    z_{v,t}=\frac{\beta}{t}\sum_{t'\leq t}x_{v,t'}\quad\text{and}\quad z_{v,<t}=\beta\sum_{t''< t}\frac{x_{v,\leq t''}}{t''}=\beta \sum_{t'\leq t}x_{v,t'}\sum_{t'\leq t''}^{t}\frac{1}{t''}\geq \beta\sum_{t'\leq t}x_{v,t'}\ln{\frac{t}{t'}}.
\end{align}
This kernel ensures that even if $x_{v,<t}\rightarrow0$, $z_{v,<T}$ eventually reach $1$ for some large $T$.

Fix an edge $e$ and denote the cover time of $e$ in the LP as $c_x(e)$
\[c_x(e):=\sum _{t}u_{e,t}.\]
Let $x_{e,t}=u_{e,t}-u_{e,t+1}$, then $c_x(e)$ can be written as
\[c_x(e)=\sum _{t}u_{e,t}=\sum_{t}t x_{e,t}.\]
Additionally, define $\Cov_{\sigma}(e)$ as the cover time of $e$ in $\sigma$. Similarly, $\Cov_{\tau}(e)$ is the cover time of $e$ in $\tau$. 

According to the rounding procedure, each vertex $v\in e$ will be scheduled with probability $\min{(z_{v, <t},1)}$ at time $t$. Let $y_{v,t}=\min{(1,z_{v,<t})}$, then for each vertex $v$, the event whether $v$ will be scheduled can be considered as a Bernoulli random variable with success probability $Y_{v,t}\sim B(y_{v,t})$. Furthermore, the event that edge $e$ is scheduled at time $t$ is determined by the sum of $|e|$ independent Bernoulli random variables. Let $p_t(e)$ be the probability that $e$ is not scheduled at time $t$ in $\tau$. By the definition of $\Cov_{\tau}(e)$ and the above analysis, 
\begin{align}
\label{eq:Bernoullisum}
    \E[\Cov_{\tau}(e)]=\sum_{t}p_t(e)=\sum_{t} \Pr\left[\sum_{v\in e} Y_{v,t}\leq k_e-1\right].
\end{align}
Let $S=\sum_{v\in e}Y_{v,t}$, then $\E[S]=\sum_{v \in e} y_{v,t}$. However, it is a challenge to analyze all vertices in $e$ simultaneously.  The Hoeffding inequality, in \cite{Hoeffding56}, gives an upper bound on (\ref{eq:Bernoullisum}).
\begin{lemma}
\label{lemma:Hoeffdingbound}
    If $S$ is sum of $n$ i.i.d Bernoulli random variables and $\E[S]=np$, then 
    \begin{alignat*}{2}
        \Pr[S\leq c]\leq &\sum_{i=0}^{c}\binom{n}{i}p^i\left(1-p\right)^{n-i},\quad&&\text{if $0\leq c\leq np-1$},\\
        \Pr[S\leq c]\leq&\max_{0\leq s\leq c}{\sum_{i=0}^{c-s}\binom{n-s}{i}\left(\frac{np-s}{n-s}\right)^i\left(1-\frac{np-s}{n-s}\right)^{n-s-i}},\quad&&\text{if $np-1< c <np$},\\
        \Pr[S\leq c]\leq &1,\quad &&\text{if $np\leq c\leq n$}.
    \end{alignat*}
\end{lemma}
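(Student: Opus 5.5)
The plan is to follow Hoeffding's original extremal argument \cite{Hoeffding56}. I read the lemma in the form the analysis needs: $S=\sum_{i=1}^n Y_i$ with the $Y_i\sim B(p_i)$ independent but not necessarily identically distributed, and $\sum_i p_i=np$. (If the $Y_i$ really are i.i.d., the first bound is an identity and nothing needs proving.) Write $F(p_1,\dots,p_n)=\Pr[S\le c]$. This is a continuous function on the compact polytope $\{p\in[0,1]^n:\sum_i p_i=np\}$, so it attains a maximum, and the task is to locate the maximizers. The third case, $np\le c\le n$, is trivial, since every probability is at most $1$.

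\textbf{Step 1 (two-coordinate reduction).} Fix every coordinate except $p_1,p_2$ and let $S'=\sum_{i\ge 3}Y_i$. Conditioning on $Y_1+Y_2$ gives
\[
F=\Pr[S'\le c-2]+(p_1+p_2-2p_1p_2)\Pr[S'=c-1]+(1-p_1-p_2+p_1p_2)\Pr[S'\le c]-(1-p_1-p_2+p_1p_2)\Pr[S'\le c-2]\cdots
\]
(bookkeeping omitted). The point is that when $p_1+p_2$ is held fixed, $F$ is an affine function of the product $p_1p_2$. Its coefficient is, up to sign, $\Pr[S'=c]-\Pr[S'=c-1]$.

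Two consequences follow. If this coefficient is nonzero, the maximum over the segment is attained either at $p_1=p_2$ (where the product is largest) or at an endpoint where one of $p_1,p_2$ lies in $\{0,1\}$. If the coefficient is zero, we may move to either kind of point without changing $F$. Applying this to every pair, some maximizer has the form: $s$ coordinates equal to $1$, $r$ coordinates equal to $0$, and the remaining $n-s-r$ coordinates all equal to a common value $q\in(0,1)$.

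\textbf{Step 2 (removing zeros and ones).} A coordinate equal to $1$ shifts $S$ deterministically by one, so it replaces $c$ by $c-1$. A coordinate equal to $0$ simply removes a variable. To show that $r=0$ can be assumed, compare $\mathrm{Bin}(m,q)$ with $\mathrm{Bin}(m+1,q')$ where $mq=(m+1)q'$, that is, spread the mass over one extra variable. Hoeffding's lemma says $\Pr[\mathrm{Bin}\le c']$ does not decrease under this spreading whenever $c'\le(\text{mean})-1$. I would reprove it by applying Step 1 to a pair consisting of one $q$-coordinate and the $0$-coordinate. The sign condition $\Pr[S'=c']\le\Pr[S'=c'-1]$ (or its reverse) then follows from unimodality of the binomial, with mode near the mean, because $c'$ sits at least one unit below the mean of $S'$. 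This step also disposes of $s$:
\begin{itemize}
\item In the regime $c\le np-1$, the same mode argument, now applied to a pair made of a $1$-coordinate and a $q$-coordinate, shows $s=0$. All $p_i$ are then equal, which is the first bound.
\item In the regime $np-1<c<np$ the sign condition can fail. I therefore keep $s$ as a free parameter, with $0\le s\le c$ since otherwise $\Pr[S\le c]=0$. The remaining $n-s$ variables carry mean $np-s$, so each has parameter $(np-s)/(n-s)$ and the threshold becomes $c-s$. Maximizing over $s$ gives exactly the middle bound.
\end{itemize}

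\textbf{Main obstacle.} The delicate point is the sign analysis in Step 2. It requires showing that whenever $c\le np-1$, every intermediate configuration in the reduction still has its threshold at least one unit below the mean of the remaining variables, so that the binomial mode inequality keeps pushing toward equal $p_i$. My first attempt would be induction on the number of boundary coordinates, using the classical fact that the mode of $\mathrm{Bin}(m,q)$ lies in $[\,mq-1,\,mq+1\,]$ together with the monotonicity of $\Pr[\mathrm{Bin}(m,q)=j]/\Pr[\mathrm{Bin}(m,q)=j-1]$ in $j$. If that becomes messy, I would instead cite Hoeffding's Theorem~4 directly for these sign facts.
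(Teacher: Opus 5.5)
Your route is Hoeffding's extremal argument. That is what the paper relies on: it gives no proof of its own and simply cites \cite{Hoeffding56}. Reading the hypothesis as independent, non-identical Bernoullis is the right interpretation. The first and third cases are fine.

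\textbf{The gap is in the middle case $np-1<c<np$.} The displayed bound uses $n-s$ trials, all with parameter $(np-s)/(n-s)$, so in that regime it needs $r=0$. The only argument you give for $r=0$ assumes $c'\le(\text{mean})-1$. With $s$ coordinates equal to $1$, one has $c'=c-s$ and mean $np-s$, so this condition is exactly what fails when $c>np-1$. As written, nothing rules out a maximizer with $s$ ones, $r\ge 1$ zeros and $m$ equal interior coordinates $q$. Its value $\Pr[\mathrm{Bin}(m,q)\le c-s]$, with $mq=np-s$, is not one of the terms in your maximum.

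\textbf{The fix.} Do the sign computation for a $(0,q)$ pair directly, as a perturbation at a maximizer, rather than through a global spreading lemma. Take a configuration with $s$ ones, $r$ zeros and $m$ coordinates equal to $q\in(0,1)$, and let $j=c-s$ (if $j<0$ then $F=0$). For a $(0,q)$ pair, the coefficient of $p_1p_2$ is $\Pr[\mathrm{Bin}(m-1,q)=j]-\Pr[\mathrm{Bin}(m-1,q)=j-1]$.
\begin{itemize}
\item For $j=0$ the second term vanishes. For $j\ge 1$, consecutive terms have ratio $\frac{(m-j)q}{j(1-q)}$. So the coefficient is strictly positive iff $j<mq=np-s$, i.e.\ iff $c<np$.
\item Hence in \emph{both} regimes, moving $(0,q)$ to $(\varepsilon,q-\varepsilon)$ strictly increases $F$. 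So no maximizer has a zero alongside an interior coordinate, and configurations with all coordinates in $\{0,1\}$ give $F=0$.
\item For a $(1,q)$ pair the coefficient is $\Pr[\mathrm{Bin}(m-1,q)=j+1]-\Pr[\mathrm{Bin}(m-1,q)=j]$, which is nonnegative iff $c\le np-1$. This is where the two regimes actually split, and why $s$ survives as a free parameter.
\end{itemize}
The relevant threshold is $mq$, the mean of the whole $q$-block, not the mean of $S'$ as you wrote.

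\textbf{A smaller point in Step~1.} ``Applying this to every pair'' needs an extremal choice, because each move changes the coefficients of the other pairs. Among all maximizers, take one with the most coordinates in $\{0,1\}$. Two distinct interior values then force a zero coefficient, since otherwise a small move in one direction increases $F$. Sliding to an endpoint of the segment then keeps $F$ unchanged and contradicts that choice.
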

This lemma allows us to describe the sum of Bernoulli random variables using $\E[S]$. Remark that the bounds in Lemma~\ref{lemma:Hoeffdingbound} can be further bounded by some exponential bound derived from Poisson distributions.

Up to this point, the above analysis is virtually identical to the framework in \cite{BansalBFT21}. However, there are some gaps in the framework that need to be addressed. Let $z_{e}=Kx_e$, and consider the following KC inequalities 
    \[(k_e-|S|)u_{e,t}+\sum_{v\in e\setminus S}x_{v,<t}\geq k_e-|S|.\]
Apply the kernel $K$ to both sides, we have
    \[\sum_{v\in e\setminus S} z_{v,<t}\geq (k_e-|S|)z_{e,<t}.\]
Since $z_{v,<t}$ corresponds to $y_{v,t}$ and $z_{e,<t}$ corresponds to $x_{e,t}$, the KC inequalities potentially allow us to bound $\E[S]$ with $z_{e,t}$ or $x_{e,t}$. However, if $z_{v,<t} > 1$ for some vertices, this approach fails as $\sum_{v\in e}z_{v,<t}\neq \sum_{v\in e}y_{v,t}$. Thus, fix a time $t$, define $A_{e}(t)=\{v~|~z_{v,<t}\geq1\}$, $B_e(t)=e\setminus A_{e}(t)$, and let $k_e(t)=|B_e(t)|$. Note that the vertices in $A_e(t)$ are definitely scheduled before time $t$ in the schedule $\tau$. We have
\[\sum_{v\in e\setminus A_{e}(t)} z_{v,<t}\geq (k_e-|A_{e}(t)|)z_{e,<t}.\]
Or equivalently, 
\begin{align}
\label{eq:zvt_to_zet}
    \sum_{v\in B_{e}(t)} z_{v,<t}\geq k_e(t)\cdot z_{e,<t}.
\end{align}
The challenge arises from the fact that $k_e(t)$ will decrease over time, so even the bound in Lemma~\ref{lemma:Hoeffdingbound} changes. In \cite{BansalBFT21}, the authors gave a common upper bound on $\Pr[S\leq k-1]$ for all $k$ as a function of $\frac{\E[S]}{k}$. Since it is independent of $k$, it is always "safe" as $k_e(t)$ changes. However, such a safe bound is not tight for any $k_e(t)$ and introduces some loss. To improve, we need to process the KC inequalities with caution and extract more information from them.

First, for any fixed $k_e(t)$, we show the exponential upper bound for $p_t(e)$. We prove the following lemma that further upper bounds Lemma~\ref{lemma:Hoeffdingbound}.

\begin{restatable}{theorem}{tail}
\label{thm:newtailbound}
Let $S=\sum Y_{v}$, for any integer $k\geq1$, then function 
\begin{align*}
    P^k(x)=\begin{cases}
    e^{-\sqrt{k}(x-\frac{k-1}{k})}, & \text{if $x> 1$},\\
    1, &\text{if $0\leq x\leq 1$}.
\end{cases} 
\end{align*}
is a non-increasing and convex upper bound on the probability $\Pr[S\leq k-1]$ if $kx=\E[S]$ where $S$ is the sum of Bernoulli random variables.
\end{restatable}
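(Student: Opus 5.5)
The plan has three steps: reduce to a Poisson lower tail, prove the Poisson inequality, then read off the shape properties. Write $\mu=\E[S]=kx$ and $c=k-1$. The regime $0\le x\le 1$ is trivial, since any probability is at most $1$. For $x>1$ we have $c=k-1<kx-1=\mu-1$, so we are in the first case of Lemma~\ref{lemma:Hoeffdingbound}. That lemma says $\Pr[S\le k-1]$ is maximized by the i.i.d.\ binomial $\mathrm{Bin}(n,\mu/n)$. Next I would compare the binomial lower tail with the Poisson lower tail. This is the classical Anderson--Samuels-type inequality $\Pr[\mathrm{Bin}(n,\mu/n)\le c]\le \Pr[\mathrm{Poi}(\mu)\le c]$, valid for $c$ below the mean by at least one. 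If it cannot be cited directly, I would prove it by showing the binomial lower tail is monotone increasing in $n$ at fixed mean in this range (a coupling/majorization argument), and then letting $n\to\infty$. This reduces the theorem to a purely one-dimensional statement: for every integer $k\ge1$ and every $\mu\ge k$,
\[
F_k(\mu):=\Pr[\mathrm{Poi}(\mu)\le k-1]\le \exp\!\Bigl(-\tfrac{\mu-(k-1)}{\sqrt k}\Bigr).
\]

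For $k=1$ the Poisson step is unnecessary: $\Pr[S\le 0]=\prod_v(1-y_v)\le e^{-\mu}$, which is exactly the claimed bound. For $k\ge2$, a single Chernoff bound is not enough. With the optimal tilt $\lambda=\ln\frac{k}{k-1}$, the check at $\mu=k$ requires $1-(k-1)\ln\frac{k}{k-1}\ge 1/\sqrt k$, and this already fails at $k=2$. So I would argue directly on $g_k(\mu)=\ln F_k(\mu)+(\mu-k+1)/\sqrt k$ and show $g_k\le 0$ on $[k,\infty)$.

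The useful facts are the following.
\begin{itemize}
\item $F_k$ is log-concave in $\mu$, so $g_k$ is concave.
\item $g_k'(\mu)=-h_k(\mu)+1/\sqrt k$, where $h_k(\mu)=\Pr[\mathrm{Poi}(\mu)=k-1]/F_k(\mu)$ is the reversed hazard, which is increasing in $\mu$.
\item At $\mu=k$, $h_k(k)\approx\sqrt{2/\pi}/\sqrt k<1/\sqrt k$, so $g_k$ first increases, reaches its maximum at the unique $\mu^\ast$ with $h_k(\mu^\ast)=1/\sqrt k$, and then decreases.
\end{itemize}
Hence it suffices to show $g_k(\mu^\ast)\le 0$.

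I expect this last step to be the main obstacle. I would handle it in two regimes.
\begin{itemize}
\item For large $k$ (say $k\ge k_0$), use a Stirling/normal-approximation estimate of the Poisson CDF near its mean. Writing $\mu=k+s\sqrt k$, one gets $F_k\approx\bar\Phi(s)$ and $h_k\approx\varphi(s)/(\sqrt k\,\bar\Phi(s))$. The maximizer then satisfies $\varphi(s)=\bar\Phi(s)$, and the requirement becomes $\ln\bar\Phi(s)+s+1/\sqrt k\le0$. At that $s$ the left-hand side has a definite negative margin (about $-0.15$), so explicit error terms in the approximation (e.g.\ from Berry--Esseen or Mills-ratio bounds) close the argument for $k\ge k_0$.
\item For the finitely many $k<k_0$, verify $g_k(\mu^\ast)\le0$ by direct (rigorous interval) computation. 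For example, at $k=2$, $\mu=2$ one has $F_2=3e^{-2}\approx0.406<e^{-1/\sqrt2}\approx0.493$.
\end{itemize}

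The shape properties are read off from the explicit formula. On $x>1$, $P^k$ is a decreasing exponential in $x$, hence decreasing and convex. On $[0,1]$ it is constant, and the right limit at $x=1$ is $e^{-1/\sqrt k}<1$, so the function is non-increasing overall. Convexity is meant piecewise in this sense: it holds on each piece, and the only jump at $x=1$ is downward.
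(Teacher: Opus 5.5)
Your reduction (Hoeffding to the i.i.d.\ binomial, binomial to Poisson, then a one-variable inequality in $\mu=kx$) is the paper's. Your $g_k$ is just $\ln R(x)$ for the paper's ratio $R$, and the paper's critical-point equation $\sqrt{k}(kx_0)^{k-1}/(k-1)!=\sum_{i<k}(kx_0)^i/i!$ is exactly your condition $h_k(\mu^\ast)=1/\sqrt{k}$.

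You diverge only at the step you call the main obstacle. Your plan is a normal approximation with Berry--Esseen or Mills-ratio error control for $k\ge k_0$, plus verified numerics below $k_0$. That is workable, but as written neither $k_0$ nor the error terms are fixed, so the decisive inequality is promised rather than proved.

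The paper closes this step with a trick your own characterization already hands you. At the critical point, $F_k(\mu^\ast)=\sqrt{k}\,e^{-\mu^\ast}(\mu^\ast)^{k-1}/(k-1)!$, so the CDF disappears and
\[
g_k(\mu^\ast)\le\sup_{\mu>0}\Bigl[\tfrac12\ln k-\mu+(k-1)\ln\mu-\ln\,(k-1)!+\tfrac{\mu-k+1}{\sqrt{k}}\Bigr],
\]
with the supremum attained at $\mu=\sqrt{k}(\sqrt{k}+1)$. Stirling's bound together with $(k-1)\ln\frac{\sqrt{k}}{\sqrt{k}-1}\le\sqrt{k}+\frac12$ then gives $e^{g_k(\mu^\ast)}\le e^{1/2+1/\sqrt{k}}\sqrt{k/(2\pi(k-1))}<1$ for $k\ge 9$. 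Only $k=2,\dots,8$ remain, and these are checked by hand.

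This route needs no location of $\mu^\ast$, no log-concavity, and no CLT error terms. It only uses that a maximum of $R$ occurs at an endpoint or at a critical point. It also avoids your approximate claim about the sign of $g_k'(k)$ by working on $\mu\ge k-1$, where the endpoint value $e^{-(k-1)}\sum_{i<k}(k-1)^i/i!\le 1$ is immediate.

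Your route does give a clearer picture: concavity, a unique maximizer, and the Gaussian limit. The cost is an explicit $k_0$ and a potentially sizable verified computation.
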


\begin{proof}
    Due to space constraints, the proof is omitted and can be found in the full version of this paper. The outline of this proof is that Lemma~\ref{lemma:Hoeffdingbound} gives the Binomial distributions as an upper bound of $p_t(e)$, and the Binomial distribution can be further approximated by the Poisson distribution. 
    \[\sum_{i=0}^{k-1}\binom{n}{i}p^i(1-p)^{n-i}\leq e^{-\lambda}\sum_{i=0}^{k-1}\frac{\lambda^i}{i!}.\]
    for some $\lambda$. Hence, we compare this sum of the Poisson distributions with the expression in the theorem. Using some approximation, we conclude this upper bound involving $k$.
    % We show a detailed proof of this bound in Section~\ref{section:tailbound}.
 \end{proof}   
    We note on some important properties of $P^k(.)$ here.
    \begin{enumerate}
        \item $P^k(x)$ is non-increasing and convex on $[1,\infty)$ for all $k\geq 1$.
        \item $P^1(1)< P^2(1)< P^3(1)<\dots$ and $P^1(2)> P^2(2)>P^3(2)>\dots$.
        \item $P^k(x)\leq 1$ for all $x$ and $k$.
    \end{enumerate}

Now Theorem~\ref{thm:newtailbound} can serve as an upper bound on $p_t(e)$, i.e., 
\begin{align*}
    p_t(e)\leq\Pr\left[\sum_{v\in B_t(e)}Y_{v,t}\leq k_e(t)-1\right]\leq P^{k_e(t)}\left(\frac{\sum_{v \in B_e(t)}y_{v,t}}{k_e(t)}\right).
\end{align*}

Define $t_e$ as the earliest time such that $z_{e,\leq t_e}\geq1$, and let 
\[c_z(e)=t_e+\sum_{t>t_e}p_t(e).\]
Intuitively, we only study the probability $p_t(e)$ when $t>t_e$, and assume that for any time $t\leq t_e$ the edge $e$ will not be covered (i.e., $p_t(e)=1$ for all $t\leq t_e$). Hence, by definition, $c_z(e)$ is an upper bound on $\E[\Cov_{\tau}(e)]$, that is, $\E[\Cov_{\tau}(e)]\leq c_z(e)$.

Using~(\ref{eq:zvt_to_zet}), we can apparently state that
\[p_t(e)\leq P^{k_e(t)}(z_{e,<t}),\quad\text{and}\quad c_z(e)\leq t_e+\sum_{t>t_e}P^{k_e(t)}(z_{e,<t}).\]
However, consider an illustrative example with $k_e=3$, we draw $P^3(x)$, $P^2(x)$, and $P^1(x)$ in Figure~\ref{fig:graph_of_P}. The black curve is the upper bound on $p_t(e)$ in~\cite{BansalBFT21} as a reference.

\begin{figure}[ht]
    \centering
    \begin{tikzpicture}
    \begin{axis}[
        width=0.8\textwidth,
        height=0.5\textwidth,
        xlabel=$x$,
        ylabel=$p_t(e)$,
        axis lines=middle, % or 'left', 'bottom', 'box'
        xmin=1, xmax=3,
        ymin=0, ymax=0.6,
        xtick distance=0.5,
        ytick distance=0.2,
        grid=major,
        domain=1:3, % Defines the range of x for the plot
        samples=100 % Number of points to sample the function
    ]
        \addplot[blue, thick] {e^(-x)};
        \addlegendentry{$P^1(x)=e^{-x}$}
        \addplot[red, thick] {exp((-sqrt(2)*(x-1/2)))};
        \addlegendentry{$P^2(x)=e^{-\sqrt{2}(x-1/2)}$}
        \addplot[green, thick] {exp(-sqrt(3)*(x-2/3))};
        \addlegendentry{$P^3(x)=e^{-\sqrt{3}(x-2/3)}$}
        \addplot[black, thick] {exp(1-x)/2};
    \end{axis}
\end{tikzpicture}   
\caption{Function graph of $P^i(x)$}
\label{fig:graph_of_P}
\end{figure}

The worst case for $c_z(e)$ seems to be the case that probabilities are taken subsequently from $P^3(x)$, $P^2(x)$, and $P^1(x)$, corresponding to the uppermost curve for any $x$, which implies that we switch between different $P^k(x)$ twice. The problem is whether such "jumps" are "lossless" and yield the maximum value of $c_z(e)$. If we consider a scenario in which the jump from $P^2(.)$ to $P^1(.)$ occurs at $z_{e,<t}=2.5$, we observe a probability $p_t(e)$ rise as $P^1(2.5)>P^2(2.5)$. By the definition of $p_t(e)$, its value never increases when $z_{v,<t}$ increases. Hence, there must be some additional terms to strengthen (\ref{eq:zvt_to_zet}) that we can extract from other KC inequalities. Next, we use multiple KC inequalities to approximate $\sum_{v\in B_e(t)} z_{v,<t}$.

\begin{lemma}
\label{lemma:E_cov_tau_and_c_z(e)}
    $c_z(e)\leq e^\frac{1}{\beta}(1+\frac{1}{\beta-1}e^{-1})c_x(e).$
\end{lemma}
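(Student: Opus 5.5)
The plan is to bound the two pieces of $c_z(e)=t_e+\sum_{t>t_e}p_t(e)$ separately, both in terms of $c_x(e)=\sum_t t\,x_{e,t}$. The main tool is the logarithmic lower bound on $z_{e,<t}$ in (\ref{eq:zvt_to_xvt}). The edge variable satisfies the same relation as the vertex variables, since $z_e=Kx_e$, so it applies to $z_{e,<t}$ as well. Combining it with concavity of $\ln$ (Jensen, using $\sum_{t'}x_{e,t'}=1$, because $u_{e,1}=1$ and $u_{e,t}\to0$):
\begin{align*}
z_{e,<t}\;\ge\;\beta\sum_{t'\le t}x_{e,t'}\ln\frac{t}{t'}\;\ge\;\beta\Big(\ln t-\ln\sum_{t'}t'x_{e,t'}\Big)=\beta\ln\frac{t}{c_x(e)} .
\end{align*}
Strictly, Jensen over $t'\le t$ needs the mass beyond $t$ to be handled. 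Terms with $t'>t$ contribute negative $\ln(t/t')$, so the full sum is at most the truncated one and the inequality only improves.

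\textbf{Step 1 (bounding $t_e$).} By the display, $z_{e,<t}\ge1$ as soon as $t\ge e^{1/\beta}c_x(e)$. Hence $t_e\le e^{1/\beta}c_x(e)$, up to an off-by-one coming from the distinction between $z_{e,\le t}$ and $z_{e,<t}$. I would absorb this by using $z_{e,\le t}\ge z_{e,<t}$ and the fact that $t_e$ is the \emph{earliest} such time.

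\textbf{Step 2 (the tail).} For $t>t_e$, (\ref{eq:zvt_to_zet}) and Theorem~\ref{thm:newtailbound} give $p_t(e)\le P^{k_e(t)}(z_{e,<t})$, using monotonicity of $P^k$. I would then aim for a $k$-uniform bound of the form $p_t(e)\le e^{-z_{e,<t}}$. This form is exactly what produces the constant in the statement. Substituting $z_{e,<t}\ge\beta\ln(t/c_x(e))$ gives $p_t(e)\le (c_x(e)/t)^\beta$. Comparing the sum with an integral starting at $T=e^{1/\beta}c_x(e)$ then yields
\begin{align*}
\sum_{t>t_e}p_t(e)\;\le\;\int_{T}^{\infty}\Big(\frac{c_x(e)}{s}\Big)^{\beta}ds=\frac{c_x(e)^{\beta}T^{1-\beta}}{\beta-1}=e^{1/\beta}\,\frac{e^{-1}}{\beta-1}\,c_x(e).
\end{align*}
Adding Step 1 gives exactly $e^{1/\beta}\big(1+\frac{1}{\beta-1}e^{-1}\big)c_x(e)$.

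\textbf{Main obstacle.} The obstacle is the uniform estimate $p_t(e)\le e^{-z_{e,<t}}$. It is exact for $k=1$, since a Poisson variable has zero mass with probability $e^{-\lambda}$. For larger $k$ and $z$ near $1$, however, $P^k(z)$ exceeds $e^{-z}$; for example $P^2(1)=e^{-\sqrt2/2}>e^{-1}$. To handle this I would first separate the regime of large $z$, where $P^k(z)\le P^1(z)=e^{-z}$ holds by property~2 of $P^k$ (the curves cross). In the remaining window just after $t_e$, I would use slack in two places. The first is the harmonic-sum version of $z_{e,<t}$, which strictly exceeds the $\ln$ bound. The second is the gap between $t_e$ and $T$ in Step 1; when $t_e<T$, the extra terms $p_t\le1$ for $t\in(t_e,T]$ can be charged to that gap. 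If this charging fails for some $k$, the fallback is to replace $e^{-z}$ by the $k$-independent bound $P_c$ from \cite{BansalBFT21}. One then checks that the same integral comparison still returns the stated constant, possibly after choosing the integration start point more carefully.
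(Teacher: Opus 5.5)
Your Jensen estimate $z_{e,<t}\ge\beta\ln(t/c_x(e))$ and Step~1 are sound. For a tail bound that depends only on $z_{e,<t}$ and is nonincreasing in it, Jensen is actually a quicker route to the point-mass worst case than Lemma~\ref{lemma:max_point}. Step~2, however, has a genuine gap.

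The pointwise estimate $p_t(e)\le e^{-z_{e,<t}}$, on which your whole integral rests, is false; it is not merely hard to extract from $P^k$. Take $k_e=2$ and $n$ vertices with identical fractional schedules. When $z_{e,<t}=1$, each vertex has $z_{v,<t}=2/n$, $A_e(t)=\emptyset$, and every KC inequality holds. Then $p_t(e)\to 3e^{-2}\approx 0.406>e^{-1}$, and more generally $p_t(e)\approx e^{-2z}(1+2z)>e^{-z}$ for $z=z_{e,<t}\in[1,1.25]$.

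The slack you plan to spend is absent in the regime where the constant is tight. Consider a point mass at $u$ with $u\to\infty$. There the harmonic-sum excess over $\beta\ln(t/u)$ is $O(1/u)$, and $t_e=T$ up to rounding. Meanwhile the window $z\in[1,1+1/\sqrt{k})$ on which $P^k(z)>e^{-z}$ is the time interval $[T,Te^{1/(\beta\sqrt{k})})$, whose length is proportional to $u$.

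The fallback also fails. Any $k$-independent bound must be at least $e^{-x}$ everywhere (the $k=1$ Poisson limit) and strictly larger near $x=1$ (the example above). Your integral comparison then returns a strictly larger constant. This is essentially the route of \cite{BansalBFT21}, which yields only $4.642$.

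The missing idea is that the comparison must be made on the integral. It closes only because of a surplus extracted from further KC inequalities.

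\begin{itemize}
\item When $k_e(t)$ drops from $i+1$ to $i$ at time $t_i$ (vertex $q_{i+1}$ reaching $z\ge 1$), the KC inequality with $S=Q_{i+1}$, together with $z_{q_{i+1},<t_i}\approx 1$, gives $\sum_{v\in e\setminus Q_i}z_{v,<t_i}\ge i\,z_{e,<t_i}+\delta_i$ with $\delta_i=z_{e,<t_i}-1$.
\item The increments $z_{v,t}$ also satisfy the KC inequality for $S=Q_i$, so this surplus persists and accumulates. Hence $p_t(e)\le P^{i+1}\bigl(z_{e,<t}+\frac{1}{i+1}\sum_{j>i}\delta_j\bigr)$ on $[t_{i+1},t_i)$.
\item Consequently, the longer the schedule stays on a larger-$k$ curve, the further the smaller-$k$ curves are shifted afterwards. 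Claim~\ref{claim:max_integral} shows that in $\int P(\cdot)\,e^{x/\beta}\,dx$ the gain from $P^i$ lying above $P^1$ before a switch is outweighed by this shift after it.
\item By induction, the tail is at most $\frac{u}{\beta}\int_1^\infty e^{-x}e^{x/\beta}\,dx=u\,e^{1/\beta}\frac{e^{-1}}{\beta-1}$.
\end{itemize}

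Without the $\delta_i$, a bound that sees only $z_{e,<t}$ and the current $k_e(t)$ must allow a switch from $k=2$ to $k=1$ at $z\approx 1.25$ with no shift. Its integral then strictly exceeds the stated constant. So some version of this surplus argument is indispensable, and no pointwise-in-$t$ estimate can replace it.
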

\begin{proof}
Define a series of time slots $\{t_1,t_2,\dots t_{k_e-1}\}$ such that $t_i$ is the earliest time that there are $k_e-i$ vertices, denoted $Q_i=\{q_{k_e},\dots, q_{i+1}\}$, satisfying $z_{v,<t_i}\geq 1$. In other words, $Q_i = A_{e}(t_i)$, $|Q_i| = k_e-i$ and $Q_i\setminus Q_{i+1} = q_{i+1}$. From time $t_{i+1}$ to $t_{i}$, the cover requirement for edge $e$ changes from $i+1$ to $i$. Therefore, for $t\in (t_{i+1}, t_i)$,
\[p_t(e)\leq P^{i+1}\left(\frac{\sum_{v\in e\setminus Q_{i+1}}y_{v,t}}{i+1}\right)\]
Since $t_e\leq t_{k_e-1}\leq \dots\leq t_2\leq t_1$, in general $c_z(e)$ is at most

\begin{align}
\label{eq:cze}
c_z(e)&\leq t_e+\sum_{t=t_e}^{t_{k_e-1}-1}P^{k_e}\left(\frac{\sum_{v\in e}y_{v,t}}{k_e}\right)\\&+\sum_{t=t_{k_e-1}}^{t_{k_e-2}-1}P^{k_e-1}\left(\frac{\sum_{v\in e\setminus Q_{k_e-1}}y_{v,t}}{k_e-1}\right)+\dots+\sum_{t>t_1}P^1\left(\sum_{v\in e\setminus Q_1}y_{v,t}\right).
\end{align}

Now, we carefully use multiple KC inequalities to evaluate $\sum y_{v,t}$ with $z_{e,<t}$. For any $t_i>t_e$, assume $z_{e,<t_i}=1+\delta_i>1$, and consider the following KC inequality at time $t_i$ with the set $S$ being $Q_{i+1}$,
\begin{align}
\label{eq:delta}
 &\sum_{v\in e\setminus Q_{i+1}}z_{v,<t_i}\geq (i+1)\cdot z_{e,<t_i},\\    \implies \quad &z_{\{q_{i+1}\},<t_i}+\sum_{v\in e\setminus Q_{i}}z_{v,<t_i}\geq (i+1)\cdot z_{e,<t_i},
\end{align}
Using the KC inequality for $x$, for any time $t$, $\sum_{v\in e\setminus Q_i}x_{v,<t}\geq i\cdot x_{e,<t}$. Therefore, for any time $t>t_i$, 
\[\sum_{v\in e\setminus Q_{i}}z_{v,t}\geq i \cdot z_{e,t},\]
which leads to 
\begin{align*}
    \sum_{v\in e\setminus Q_{i}}z_{v,<t}&=\sum_{v\in e\setminus Q_i}z_{v,<t_i}+\sum_{v\in e\setminus Q_i}\sum_{t'=t_i}^{t-1}z_{v,t'}\\
    &\geq i\cdot z_{e,<t_i}+(z_{e,<t_i}-z_{q_{i+1},<t_i})+i\sum_{t'=t_i}^{t-1}z_{e,t'}\\
    &\geq i\cdot z_{e,<t}+\delta_i,
\end{align*}
where we used the fact that $z_{q_{i+1},<t_i}<1$ and $z_{e,<t_i}=1+\delta_i$.

Hence, as $t_i\geq t_e$, there exists a non-decreasing gap between $\sum_{v\in e\setminus Q_{i}}{z_{v,<t}}$ and $iz_{e,<t}$. Moreover, these gaps can accumulate over time. For example, at any time $t>t_{k_e-1}$, we have
\[\sum_{v\in e\setminus \{q_{k_e-1}\}}z_{v,<t}\geq (k_e-1)z_{e,<t}+\delta_{k_e-1}.\]
At time $t\geq t_{k_e-2}$, using the above inequality, we obtain
\[z_{q_{k_e-2},<t}+\sum_{v\in e\setminus Q_{k_e-2}}z_{v,<t}\geq (k_e-1)z_{e,<t}+\delta_{k_e-1}.\]
Applying the argument above, we get
\[\sum_{v\in e\setminus Q_{k_e-2}}z_{v,<t}\geq (k_e-2)z_{e,<t}+\delta_{k_e-2}+\delta_{k_e-1}.\]
Therefore, for any time $t_{i+1}\leq t<t_i$, we use the following inequality to approximate the probability that $e$ is not covered before time $t$.
\[p_t(e)\leq P^{i+1}\left(z_{e,<t}+\frac{\sum_{j={i+1}}^{k_e-1}\delta_{j}}{i+1}\right)\]

Hence, the upper bound~(\ref{eq:cze}) can be written as
\begin{align}
\label{eq:cze_new}
\E[\Cov_{\tau}(e)]&\leq t_e+\sum_{t=t_e}^{t_{k_e-1}-1}P^{k_e}\left(z_{e,<t}\right)+\dots+\sum_{t>t_2}^{t_1}P^2\left(z_{e,<t}+\frac{\sum_{j=2}^{k_e-1}\delta_j}{2}\right)\\
&\qquad +\sum_{t>t_1}P^1\left(z_{e,<t}+\sum_{j=1}^{k_e-1}\delta_j\right)\notag.
\end{align}
Doing similar calculations as in~(\ref{eq:zvt_to_xvt}), but for $e$ instead of $v$, we get a similar formula for $z_{e,<t}$, 
\[z_{e,<t}\geq \beta\sum_{t'\leq t}x_{e,t'}\ln{\frac{t}{t'}}.\]
Let $t^*$ be the (fractional) time such that
\[\beta\sum_{t'\leq \floor{t^*}}x_{e,t}\ln{t^*/t'}=1.\]
Naturally, $z_{e,<t*}\geq 1$. Since $t_e$ is the earliest time that $z_{e,\leq t_e}\geq1$, we have $\floor{t^*}\geq t_e$. Let $\gamma_t=\beta(\sum_{t'\leq t}{a_{t'}\ln{(t/t')}})$ where $a_t$ corresponds to $x_{e,t}$ in our analysis for GMSSC. Using the non-increasing property of $P^k$, substituting $z_{e,<t}$ with $\gamma_t$ only increases $c_z(e)$. Therefore, we formulate $c_z(e)/c_x(e)$ as the following optimization problem. Recall that $c_x(e)=\sum_{t}tx_{e,t}$.
\begin{align}
    &(\mathcal{F})\quad\textrm{Maximize}\ \frac{\floor{t^*}+\sum_{t=\floor{t^*}+1}^{t_{k_e-1}-1}P^{k_e}(\gamma_t)+\cdots+\sum_{t_1}P^1(\gamma_t+\sum_j {\delta_j})}{\sum_t ta_t}\\
    &\text{s.t.}\quad\pnorm{a}{1}=1,\quad \gamma_{t^*}=1,\quad (z_{q_{i+1},< t_{i}}<1, \text{ and } z_{q_{i+1},\leq t_{i}}>1) \dots \notag
\end{align}
For simplicity, we denote $\sum_{t>\floor{t^*}}P^{k_e(t)}(\gamma_t+\sum_{j=k_e(t)}^{k_e-1} {\delta_j}/k_e(t))$ as $\sum_{t>t^*}P(\gamma_t)$, where $P(x)\leq 1$, $P'(x)\leq 0$ and $P''(x)\geq 0$. 

The following fact uses the convexity of a function to conclude its maximum value on a compact region.
\begin{fact}[\cite{BansalBFT21} Fact 8]
\label{fact:compact}
Let $f: R^n\rightarrow R$ be a convex and non-negative function on some compact region $H$, and $g:R^n\rightarrow R$ is a linear and positive function on $H$, then the maximum of $\max_{x\in H}f(x)/g(x)$ is attained at an extreme point of $H$.
\end{fact}

\begin{proof}
    Consider any $x,y\in H$ and $\lambda\in[0,1]$,
    \[\max\left(\frac{f(x)}{g(x)},\frac{f(y)}{g(y)}\right)\geq\frac{\lambda f(x)+(1-\lambda)f(y)}{\lambda g(x)+(1-\lambda)g(y)}\geq \frac{f(\lambda x+(1-\lambda)y)}{g(\lambda x+(1-\lambda) y)}.\]
\end{proof}

The following lemma describes the maximum scenario for $\mathcal{F}$.
\begin{lemma}
\label{lemma:max_point}
    If $P(x)$ is function such that $P(x)\leq 1$, $P''(x)\geq 0$, and $P'(x)\leq 0$ for all $x\geq1$, then for the maximum problem
    \[\mathcal{F}=\frac{\floor{t^*}+\sum_{t>t^*}P(\gamma_t)}{\sum_{t}t a_{t}},\] with at least $m\geq 2$ non-trivial constraints, it always attains its maximum value when some $a_{t}=1$.  
\end{lemma}

\begin{proof}

In \cite{BansalBFT21}, the authors proved this lemma when $m=2$. Let us consider the case where there are more non-zero variables. We focus on how to reduce the number of nonzero variables. Since each $P^i$ is convex, $\sum_{t}P^{k_e(t)}$ is also a convex function. Using Fact~\ref{fact:compact}, the max value of $\mathcal{F}$ attains at some extreme point, where there are at most $m$ non-zero variables.

Assume that there non-zero variables are $c_1\leq c_2\leq\dots\leq c_m$ and corresponding values are $a_1,a_2,\dots, a_m$. The following two cases discuss the non-zero variables before $t^*$ or after $t^*$, respectively.
\begin{enumerate}
    \item $c_1\leq \dots c_{l-1} \leq t^*$. 
    
    Suppose that $\sum_{i=1}^{l-1}a_i=D$ for some constant $D\leq 1$ and let $c_1=t^*e^{-b_1}$, $c_2=t^*e^{-b_2}$, up to $c_{l-1}=t^*e^{-b_{l-1}}$. The constraint $\gamma_{t^*}=1$ becomes 
    \[\beta\sum_{i=1}^{l-1} a_ib_i=1.\]
    
    For any time $t\geq t^*$, 
    \[\gamma_t=D\beta\ln{t/t^*}+\beta\sum_{i=1}^{l-1} a_ib_i+\sum_{i=l}a_i\ln{t/c_i}=1+D\beta\ln{t/t^*}+\sum_{i=l}a_i\ln{t/c_i}. \]
    Which indicates that the numerator is independent of any individual $a_i$ or $b_i$ for $i < l$. Then $\mathcal{F}$ becomes a minimization problem. In particular,
    \[\text{Minimize}\quad t^*\beta\left(\sum_{i=1}^{l-1} a_ie^{-b_i}\right)+\sum_{i=l}c_ia_i\quad\text{s.t.} \quad \beta\left(\sum_{i=1}^{l-1} a_ib_i\right)=1.\]
    Using the convexity of $e^{-x}$, the extreme point is at $b_1=b_2=\cdots=b_{l-1}$. Consequently, we reduce the number of non-zero variables before $t^*$ to $1$. 
    
    \item $t^*\leq c_l\leq \dots\leq c_m$.

    Now we assume that $b_1=b_2=\cdots=b_{l-1}=B$, the constraint $\gamma_{t^*}=1$ becomes
    \[DB\beta=1,\] 
    and is independent of $c_l,\dots, c_m$. Let $c_l=t^*b_l$, $c_{l+1}=t^*b_{l+1}$, up to $c_m=t^*b_m$. For time $t>t^*$, the denominator is
    \begin{align}
    \label{eq:denominator}
        t^*e^{-B}D\beta + t^* \sum_{i=l} b_i a_i.
    \end{align}
    While the numerator is
    \begin{align*}
    &t^*+\sum_{t=t^*+1}^{c_1}P(1+D\beta\ln{t/t^*})+\sum_{c_1+1}^{c_2}P(1+D\beta\ln{t/t^*}+\beta a_l \ln{t/t^*a})\\&+\cdots+\sum_{c_m}^{\infty}P(1+D\beta\ln{t/t^*}+\beta\sum_{i=l}^m a_i \ln t/t^*b_i).
    \end{align*}
    
     Let $x=t/t^*$, we can upper bound the numerator using the fact that all $P^i$ are non-increasing and $P^i(\gamma_t)\leq 1$. For convenience, we denote $A_m(x)=1+D\beta\ln x+\beta\sum_{i=l}^m a_i (\ln{x}-\ln{b_i})$. Hence, 
    \begin{align}
    \label{eq:numerator}
        &t^*\left(1+\int_{1}^{b_l}P(1+D\beta\ln x)dx+\cdots+\int_{b_{m-1}}^{b_m}P(A_{m-1}(x))dx\right) \notag\\
        &+t^*\left(\int_{b_m}^{\infty}P(A_{m-1}(x)+\beta a_m(\ln{x}-\ln{b_m}))dx\right).
    \end{align}

    Notice that only the last two terms contain $b_m$, so we first try to eliminate $b_m$. Let (\ref{eq:numerator}) be $f(b_m)$ and let (\ref{eq:denominator}) be $g(b_m)$, we show that $f(b_m)/g(b_m)$ always reaches its maximum value when $b_m=b_{m-1}$ and thus eliminate $b_m$ and $c_m$.
    \end{enumerate}

    \begin{claim}
    \label{claim:max_point}
        The function $f(b_m)/g(b_m)$ achieves its maximum value at $b_m = b_{m-1}$.
    \end{claim}

    \begin{proof}
    We rename $b_m$ as $b$ here. To prove this claim, we show that $(f(b)/g(b))'\leq 0$ for all $b \geq b_{m-1}$. Define 
    \[h(b)=f'(b)g(b)-f(b)'g(b).\]
    and correspondingly, 
    \[h'(b)=f''(b)g(b)-f(b)g''(b).\]
    It suffices to show that $h(b)\leq 0$ for all $b\geq b_{m-1}$. Note that $g'(b)=a_m$, and $g''(b)=0$. Therefore, $h'=f''g$, and the sign of $h'$ depends only on $f''$. We then compute $f'(b)$ and $f''(b)$.
        \begin{align*}
        f'(b)&=P\left(A_{m-1}(b)\right)-P\left(A_{m-1}(b)+\ln{b}-\ln{b}\right)\\
        &-\frac{\beta}{b}a_m\int_b^\infty P'\left(1+D\beta\ln x+\sum_{i=l}^{m}a_i(\ln x-\ln b_i)\right)dx\\
        &\mbox{substituting x=bt}\\
        &=-\beta a_m \int^\infty_1 P'\left(1+D\beta\ln t+D\beta \ln b+\sum_{i=l}^{m-1}a_i(\ln b+\ln t-\ln b_i)+a_m\ln t\right)dt.
        \end{align*}
    Hence,
    \begin{align*}
        f''(b)=-\frac{\beta^2a_m}{b}\int^\infty_1P''(1+D\beta\ln t+D\beta \ln b+\sum_{i=l}^{m-1}a_i(\ln b+\ln t-\ln b_i)+a_m\ln t)dt.
    \end{align*}
    Because all $P^i$ is convex on $[1,\infty)$, for all $b\geq b_{m-1}$
    \[P''(1+D\beta\ln t+D\beta \ln b+\sum_{i=l}^{m-1}a_i(\ln b+\ln t-\ln b_i)+a_m\ln t)\geq 0.\] Thus, $f''(b)\leq 0$ and $h'(b)\leq 0$ for all $b\geq b_{m-1}$.

    Meanwhile, we need to show that $h(b_{m-1})\leq 0$. However, proving $h(b_{m-1})\leq 0$ is equivalent to solving a subproblem of this claim, particularly, exactly one less non-zero variable after $t^*$. Therefore, we can recursively apply the same argument and reduce the number of non-zero variables. Finally, we reach the case where there is one non-zero variable after $t^*$, and we prove $h(1)\leq 0$. We list the proof of $h(1)\leq 0$ here.
    \begin{align*}
        &f(1)=1+\int_1^{\infty}P(1+\beta\ln x)dx\quad\text{and}\quad g(1)=De^{-1/D\beta},\\
        &f'(1)=-\int_{1}^\infty\beta(1-D)P'(1+\beta\ln x)dx\ \text{and}\quad g'(1)=1-D.
    \end{align*}
    Note that $f'(1)\geq0$ as $P'(x)\leq 0$ and $g(1)\leq 1$, we have
    \begin{align*}
        h(1)&=f'(1)g(1)-f(1)g'(1)\leq f'(1)-f(1)(1-D)\\
        &\leq-(1-D)\left(1+\int_{1}^\infty\beta P'(1+\beta\ln x)+P(1+\beta\ln x)dx \right)\\
        &=-(1-D)\left(1+\int_{1}^\infty \left(e^{\frac{y-1}{\beta}}P(y)\right)'dy\right)\tag{let $1+\beta\ln x=y$}\\ 
        &\leq (1-D)(1-P(1))\leq 0.
    \end{align*}
    Consequently, $h(b)$ is a non-increasing function on $[b_{m-1},\infty)$ and $h(b)\leq h(b_{m-1})$ for all $b\geq b_{m-1}$.  
    \end{proof}
    
    With Claim~\ref{claim:max_point}, we conclude that any $m\geq 3$ can be reduced to the $m=1$ case, where the remaining constraint is $\gamma_{t^*}=1$. Solving $\gamma_{t^*}=1$ when some $a_{u}=1$, we obtain $u=t^*\cdot e^{-1/\beta}$. 

\end{proof}

Using Lemma~\ref{lemma:max_point}, we conclude that the maximum value of $\mathcal{F}$ occurs at time $u=t^*\cdot e^{-1/\beta}$ such that $x_{e,u}=1$, and we simplify $\mathcal{F}$ as follow.

\begin{align}\label{eq:F_integral}
\mathcal{F}&\leq \frac{1}{u}\left(t^*+\int_{t^*}^{t_{k_e-1}}P^{k_e}\left(\beta\ln{\frac{t}{u}}\right)dt+\cdots+\int_{t_1}^{\infty}P^1\left(\beta\ln{\frac{t}{u}}+\sum_{j}\delta_j\right)dt\right)\notag\\
&\leq e^{\frac{1}{\beta}}+\frac{1}{\beta}\left(\int_{1}^{1+\delta_{k_e-1}}P^{k_e}(x)e^{\frac{x}{\beta}}dx+\cdots\right)\\
&+\frac{1}{\beta}\left(\int_{1+\delta_2}^{1+\delta_1}P^2\left(x+\frac{\sum_{j=2}^{k_e-1}\delta_j}{2}\right)e^{\frac{x}{\beta}}dx+\int^\infty_{1+\delta_1} P^1\left(x+\sum_{j=1}^{k_e-1}\delta_j\right)e^\frac{x}{\beta}dx\right).\tag{let $x=\beta\ln(t/u)$ and change of variables}
\end{align}

Now, for the integral part, we argue that it is at most $\int_1^\infty e^{-x}e^{x/\beta}dx$ by demonstrating that any jump between different $P^i$ only reduces the value of $\mathcal{F}$. 
\begin{figure}[ht]
    \centering
    \begin{tikzpicture}
    \begin{axis}[
        xlabel=$x$,
        ylabel=$p_t(e)$,
        axis lines = box, % or 'left', 'bottom', 'box'
        xmin=1, xmax=3,
        ymin=0, ymax=1,
        xtick = {1,1.5,2,2.5,3},
        ytick = {0,0.2,0.4,0.6,0.8,1},
        grid=major,
        domain=1:3, % Defines the range of x for the plot
        samples=100 % Number of points to sample the function
    ]
        \addplot[blue, thick, name path=f] {e^(-x/2)};
        \addlegendentry{$P^1(x)e^\frac{x}{\beta}$}
        \addplot[red, thick, domain = 1:1.2, name path =g] {exp((-sqrt(2)*(x-1/2))+x/2)};
        \addlegendentry{$P^2(x)e^\frac{x}{\beta}$}
        \addplot[green, thick, domain=1.2:3, name path=h] {exp(-x/2-0.2)};
        \addlegendentry{$P^1(x+\delta)e^\frac{x}{\beta}$}
        \addplot[dotted, thick, black, domain=1:1.2] {exp(-x/2-0.2)};
        \addplot[dotted, thick, black, domain=1.2:3] {exp((-sqrt(2)*(x-1/2))+x/2)};
        \addplot[dashed, thick, gray, forget plot] coordinates {(1.2,0) (1.2,0.6)};
        \node[above] at (axis cs:1.35, 0.2) {$1+\delta_1$};
        \path[name path=x_axis] (axis cs:1,0) -- (axis cs:3,0); % Extend across your x-range
        
        \addplot [
        fill=gray!50,
        samples=20,
        forget plot
    ]
    fill between [
        of=f and x_axis,
        soft clip={domain=1:3},
    ];
        
        \addplot [
        color=blue!30,
        fill opacity=0.7,
        domain=1:1.2,
        samples=100,
    ]
    fill between [
        of=f and g,
        soft clip={domain=1:1.2}, 
    ];
        \addplot [
        color=red!30,
        fill opacity=0.7,
        domain=1:1.2,
        samples=100,
    ]
    fill between [
        of=f and h,
        soft clip={domain=1.2:3}, 
    ];
    \end{axis}
\end{tikzpicture}   
\caption{The loss of jumping between $P^i$}
\label{fig:loss}
\end{figure}

Consider $k_e=2$, and according to the analysis above, our goal is to show that
\[\int_1^{1+\delta_1}P^2(x)e^\frac{x}{\beta}dx+\int^\infty_{1+\delta_1}P^1(x+\delta_1)e^\frac{x}{\beta}dx\leq\int_{1}^\infty P^1(x)e^\frac{x}{\beta}dx.\]
We draw three curves $P^2(x)\exp(x/\beta)$, $P^1(x)\exp(x/\beta)$ and $P^1(x+\delta_1)\exp(x/\beta)$ in Fig~\ref{fig:loss}. Comparing LHS with RHS (gray area), we observe two differences: the blue area represents the gain from $\int^{1+\delta_1}_1 P^2(x)^\frac{x}{\beta}dx$, and the red area represents the loss due to the gap $\delta_1$. If we can prove that the red area is larger than the blue one, we can argue that the jump ($\delta_1>0$) introduces some loss, and the maximum value returned from $\int_1^\infty P^1(x)^\frac{x}{\beta}dx$.

The following claim formally proves that the worst case is always $P^1(.)$ due to these gaps $\delta_i$.

\begin{claim}
\label{claim:max_integral}
For any $b\geq  a\geq0$ and $\Delta\geq 0$, 
    \[\int\limits_{1+a}^{1+b}P^i\left(x+\frac{a+\Delta}{i}\right)e^\frac{x}{\beta}dx+\int\limits_{1+b}^\infty P^1\left(x+b+\frac{a+\Delta}{i-1}\right)e^{\frac{x}{\beta}}dx\leq \int\limits_{1+a}^\infty P^1\left(x+a+\frac{\Delta}{i}\right)e^\frac{x}{\beta}dx.\]
\end{claim}

\begin{proof}
Let LHS be a function of $b$,
\[L(b)=\int_{1+a}^{1+b}P^i\left(x+\frac{a+\Delta}{i}\right)e^\frac{x}{\beta}dx+\int_{1+b}^\infty P^1\left(x+b+\frac{a+\Delta}{i-1}\right)e^{\frac{x}{\beta}}dx.\]
When $b=a$, as $P^i$ is a non-increasing function,
\begin{align*}
    L(a)=\int_{1+a}^\infty P^1\left(x+\frac{a+\Delta}{i-1}+a\right)dx\leq\int_{1+a}^\infty P^1\left(x+\frac{\Delta}{i}+a\right)dx=RHS.
\end{align*}
Meanwhile when $b\rightarrow\infty$,
\begin{align*}
        \lim_{b\rightarrow\infty}L(b)&=\int_{1+a}^\infty P^i\left(x+\frac{\Delta+a}{i}\right)e^{\frac{x}{\beta}}dx\\
        &=\frac{1}{\sqrt{i}-1/\beta}\exp\left(-\frac{\Delta+(i+1)a+1}{\sqrt{i}}+\frac{1+a}{\beta}\right)\\
        &=\frac{1}{\sqrt{i}-1/\beta}e^{-\frac{1}{\sqrt{i}}} e^{\frac{1+a}{\beta}} \exp\left(-\frac{\Delta+(i+1)a}{\sqrt{i}}\right).
\end{align*}
Given the facts that
\[\frac{1}{\sqrt{i}-1/\beta}e^{-\frac{1}{\sqrt{i}}}\leq \frac{1}{1-1/\beta}e^{-1},\quad \text{and}\quad \frac{i+1}{\sqrt{i}}\geq 2.\]
We conclude that 
\[\lim_{b\rightarrow\infty}L(b)\leq \frac{\beta}{\beta-1}e^{-1-2a-\frac{\Delta}{i}}e^\frac{1+a}{\beta}=\int_{1+a}^\infty P^1\left(x+a+\frac{\Delta}{i}\right)e^\frac{x}{\beta}dx.\]
The claim holds for both $b=a$ and as $b\rightarrow\infty$. Then we check if $L$ has any extreme point on $[a,\infty)$. Compute the derivative of $L(b)$,
\begin{align*}
    L'(b)&=\exp\left(-\sqrt{i}(b+\frac{1+\Delta+a}{i})+\frac{1+b}{\beta}\right)\\
    &-\frac{2\beta-1}{\beta-1}\exp\left(-(1+2b+\frac{\Delta+a}{i-1})+\frac{1+b}{\beta}\right).
\end{align*}
Let $E=(2\beta-1)/(\beta-1)$, and note that $E>e$. Solving for the extreme point $b_0$ from $L'(b_0)=0$, we get
\begin{align}
\label{eq:extreme_int}
    (\sqrt{i}-2)b_0=1-\ln\frac{2\beta-1}{\beta-1}-\frac{1}{\sqrt{i}}+\left(\frac{1}{i-1}-\frac{1}{\sqrt{i}}\right)(\Delta+a).
\end{align}
We discuss the extreme point based on the value of $i$.
\begin{enumerate}
    \item $i>4$. In this case, the RHS of (\ref{eq:extreme_int}) is a decreasing function of $\Delta$ and $a$,  as $1/(i-1)<1/\sqrt{i}$ for all $i\geq 3$. Hence,
    \[b_0 < -\frac{1}{\sqrt{i}(\sqrt{i}-2)}<0.\]
    Thus, $L(b)$ is monotone on $[0,\infty)$. The maximum of $L(b)$ is either $L(a)$ or $L(\infty)$, and the claim holds.
    % $L(b)$ is nonincreasing for all $b>0$ when $i>4$. Therefore, $L(a)$ is the maximum value for all $b\geq a$.
    \item $i = 4$. Now $L(b)$ has no extreme points. Since $L$ is monotone, the claim holds.

    \item $i=3$. Potentially, $b_0>a\geq 0$, but consider $L'(0)$,
    \begin{align*}
        L'(0)&<e^{-\frac{1+\Delta+a}{\sqrt{i}}+\frac{1}{\beta}}-e\cdot e^{-1-\frac{\Delta+a}{i-1}+\frac{1}{\beta}}\\
        &\leq e^\frac{1}{\beta} \left(e^{-\frac{1}{\sqrt{i}}}e^{-\frac{\Delta+a}{\sqrt{i}}}-e^{-\frac{\Delta+a}{i-1}}\right)< 0.
    \end{align*}
    Hence, $L(b_0)$ gives the minimum value. The maximum value is $L(a)$ or $L(\infty)$, so the claim holds.
    \item $i=2$. We compute the extreme point,
    \[b_0=-\frac{1}{2}(\Delta+a+1)+\frac{1}{2-\sqrt{2}}\ln E.\]
    Let $b_0\geq a$ so that there is an extreme point on $[a,\infty)$, we have,
    \[\Delta+3a+1\leq\frac{2}{2-\sqrt{2}}\ln E.\]
    Consider $L'(a)$
    \[L'(a)=e^{\frac{1+a}{\beta}}(e^{-\frac{\sqrt{2}}{2}(1+3a+\Delta)}-e^{\ln E-1-\Delta-3a}).\]
    The sign of $L'(a)$ depends on
    \[-\frac{\sqrt{2}}{2}(1+3a+\Delta)-(\ln E-\Delta-1-3a)=\frac{2-\sqrt{2}}{2}(1+3a+\Delta)-\ln E\leq 0.\]
    As a result, $L'(a)\leq 0$, and $L(b_0)$ again gives the minimum value. The maximum value is $L(a)$ or $L(\infty)$, and the claim is valid.
\end{enumerate}
Towards upper bounding~(\ref{eq:F_integral}), we set $a=\delta_{i}$, $b=\delta_{i-1}$, and $\Delta$ represents the sum of all the other $\delta_j$. 
\end{proof}

Using Claim~\ref{claim:max_integral} recursively, we obtain

\begin{align*}
    \mathcal{F}&\leq e^{\frac{1}{\beta}}+\frac{1}{\beta}\left(\cdots+\int_{1+\delta_2}^{1+\delta_1}P^2\left(x+\frac{1}{2}\sum_{j=2}^{k_e-1}\delta_j\right)dx+\int^\infty_{1+\delta_1} P^1\left(x+\sum_{j=1}^{k_e-1}\delta_j\right)e^\frac{x}{\beta}dx\right)\\
    &\leq e^\frac{1}{\beta}+\frac{1}{\beta}\left(\cdots+\int^\infty_{1+\delta_2} P^1\left(x+\frac{1}{2}\sum_{j=3}^{k_e-1}\delta_j+\delta_2\right)e^\frac{x}{\beta}dx\right)\\
    &\leq e^\frac{1}{\beta}+\frac{1}{\beta}\left(\cdots+\int^\infty_{1+\delta_3} P^1\left(x+\frac{1}{3}\sum_{j=4}^{k_e-1}\delta_j+\delta_3\right)e^\frac{x}{\beta}dx\right)\\
    &\leq \cdots\\
    &\leq e^\frac{1}{\beta}+\frac{1}{\beta}\left(\int^\infty_{1} P^1(x)e^\frac{x}{\beta}dx\right)\\
    &=e^\frac{1}{\beta}\left(1+\frac{1}{\beta-1}e^{-1}\right).
\end{align*}

On the other hand, if $t_1\leq t_e$, then $c_z(e)\leq t_e+\sum_{t_e}P^{k_e-1}(z_{e,<t})+\dots$, which is equivalent to the case where there are $k_e-1$ vertices remain to be covered, and allowing us to apply the same argument. Hence,
\[c_z(e)\leq e^\frac{1}{\beta}\left(1+\frac{1}{\beta-1}e^{-1}\right)c_x(e).\]
\end{proof}

\begin{lemma}
\label{lemma:E_cov_sigma_and_E_cov_tau}
    $\E[\Cov_{\sigma}(e)]\leq \beta c_z(e)$
\end{lemma}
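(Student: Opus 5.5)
This is the tie-breaking step of the framework of~\cite{BansalBFT21}: we pass from the time-slotted schedule $\tau$, in which many vertices may share a slot, to the permutation $\sigma$. The plan is to bound the position of the vertex completing $e$ in $\sigma$ by the expected number of vertices $\tau$ places no later than it. The factor $\beta$ comes from the kernel: the total $z$-mass placed in slot $t$ is
\[
\sum_v z_{v,t}=\frac{\beta}{t}\sum_{t'\le t}\sum_v x_{v,t'}\le \frac{\beta}{t}\cdot t=\beta,
\]
by constraint~(\ref{eqn:cons2}). Since $\Pr[\tau\text{ places } v \text{ at } t]\le z_{v,t}$ (the increment of $\min(1,z_{v,<t})$), the expected number of vertices assigned to any slot $t$ is at most $\beta$.

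\textbf{Step 1 (fixed completing vertex).} Condition on $\Cov_\tau(e)=T$ and on which vertex $w$ completes $e$ in $\tau$. With random tie-breaking, $\sigma$-position of $w$ is at most (number of vertices with $\tau$-slot $<T$) plus (one plus the expected number of other vertices in slot $T$ that precede $w$). Vertices other than those in $e$ are independent of the conditioning event, since the $\alpha_v$ are independent. So the expected count of such vertices in slots $t'<T$ is at most $\beta(T-1)$, and in slot $T$ is at most $\beta$, of which on average half precede $w$. For vertices of $e$ itself, we will simply charge them to the $k_e\le$ count already absorbed, or argue they are placed before $w$ only if they are in the first $k_e$ anyway, contributing at most the cover count. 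Hence
\[
\E[\Cov_\sigma(e)\mid \Cov_\tau(e)=T]\le \beta(T-1)+\tfrac{\beta}{2}+1\le \beta T,
\]
with the last inequality using $\beta\ge 2$. One could also drop the half-slot saving and use the cruder $\beta(T-1)+\beta+\ldots$ if needed.

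\textbf{Step 2 (integrate).} Averaging over $T$ gives $\E[\Cov_\sigma(e)]\le\beta\,\E[\Cov_\tau(e)]$. The remark after the definition of $c_z(e)$ then gives $\E[\Cov_\tau(e)]\le c_z(e)$, which is the statement.

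\textbf{Main obstacle.} The delicate point is the handling of vertices inside $e$, whose slots are correlated with the event $\Cov_\tau(e)=T$. Vertices of $e$ scheduled strictly before $T$ number at most $k_e-1$. Since $e$ is uncovered before $T$, these are not extra cost beyond what the bound $\beta(T-1)$ already counts for their slots, because their unconditional mass is part of the $\beta$ per slot. Conditioning, however, could inflate their probability. I would handle this by counting only vertices outside $e$ via independence, and bounding the $e$-vertices before $w$ deterministically by $k_e-1$. I would then check that the slack $\beta T-\beta(T-1)-\beta/2-1=\beta/2-1\ge 0$, together with $T\ge t_e$, absorbs this. If the slack does not suffice, I would fall back on the argument of~\cite{BansalBFT21}, which proves exactly this inequality for general $k_e$ and whose proof does not depend on $k_e(t)$.
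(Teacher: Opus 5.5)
Your frame (per-slot $z$-mass at most $\beta$, independence of the vertices outside $e$ from the cover event, random tie-breaking) matches the intuition the paper gives before it defers to Lemma~19 of~\cite{BansalBFT21}. However, Step~1 has a genuine gap. Your bound $\beta(T-1)+\beta/2+1$ counts the vertices outside $e$ plus $w$ itself. But $\Cov_\sigma(e)$ also contains the other $k_e-1$ vertices of $e$ that precede $w$. These are not ``already absorbed'': your $\beta(T-1)$ came from independence, which holds only for $v\notin e$. Conditioning on $\Cov_\tau(e)=T$ forces $k_e$ vertices of $e$ into slots $\le T$, even when (for $T\le t_e$) their $z$-mass there is far below $k_e$.

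The repair you propose does not work. The slack $\beta/2-1\approx 0.02$ is a constant, while $k_e$ is unbounded. Also, $T\ge t_e$ is false: $\tau$ covers $e$ before $t_e$ with positive probability, which is exactly why $c_z(e)$ charges every $t\le t_e$ in full. So your Step~2 intermediate claim $\E[\Cov_\sigma(e)]\le\beta\,\E[\Cov_\tau(e)]$, which is stronger than the lemma, is not established. Separately, ``half of slot $T$ precedes $w$'' is wrong in general: if $m$ vertices of $e$ tie at slot $T$ and $j$ of them are still needed, a tied outside vertex precedes the completing one with probability $j/(m+1)$.

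The missing idea is to let the vertices of $e$ pay for themselves, using the kernelized KC inequality with $S=\emptyset$: $\sum_{v\in e}z_{v,<s}\ge k_e z_{e,<s}$, which is at least $k_e$ once $s\ge t_e+1$. Since $\Pr[\tau_v\le s]=\min(1,z_{v,<s})$ and $\sum_v z_{v,<s}\le\beta(s-1)$, for such $s$ the expected number of $v\notin e$ with $\tau_v\le s$ is at most $\beta(s-1)-k_e$. Note this already covers slot $s$ itself, so your separate $+\beta$ for slot $T$ is an off-by-one.

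Now use $\Cov_\sigma(e)\le k_e+\#\{v\notin e:\tau_v\le T^+\}$ with $T^+=\max(\Cov_\tau(e),t_e+1)$. This $T^+$ is a function of $\{\alpha_u\}_{u\in e}$ only. Conditioning on $T^+=s$ gives $\E[\Cov_\sigma(e)\mid T^+=s]\le\beta(s-1)$. Hence
$\E[\Cov_\sigma(e)]\le\beta\,\E[T^+-1]=\beta\bigl(t_e+\sum_{t>t_e}p_t(e)\bigr)=\beta c_z(e)$.
This needs neither a half-slot saving nor $\beta\ge 2$. It also explains why the right-hand side is $c_z(e)$ rather than $\E[\Cov_\tau(e)]$: the $t_e$ term is precisely what handles the early-cover case your argument could not.
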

\begin{proof}
    Apply the kernel to (\ref{eqn:cons2}),
    \[\sum_{v}z_v=\sum_{v}Kx_v\leq K\mathbf{1}=\beta\mathbf{1}.\]
    Intuitively, the kernel allows the rounding algorithm to schedule at most $\beta$ vertices at the same time, indicating that $\E[\Cov_{\sigma}(e)]\approx\beta \E[\Cov_{\tau}(e)]$ due to the random tie-breaking rule. See~\cite{BansalBFT21} Lemma 19 for more details.
\end{proof}

\main*

\begin{proof}
With Lemma~\ref{lemma:E_cov_tau_and_c_z(e)} and Lemma~\ref{lemma:E_cov_sigma_and_E_cov_tau}, it follows that the approximation ratio is at most
\[\beta e^{\frac{1}{\beta}}\left(1+\frac{1}{\beta-1}e^{-1}\right).\]
The ratio is $4.509$ when $\beta=2.043$.
\end{proof}

\section{2-approximation for Min Latency Set Cover}
\label{section:MLC}
We also study the min latency set cover problem, which is a special case of GMSSC where $k_e=|e|$ for all hyperedges $e$. Apply $K$ to $x_e$ and $x_v$, the KC inequalities become 
\[z_{v,<t}\geq z_{e,<t},\quad \forall~v\in e.\]
Hence, by the earliest time $t_e$ such that $z_{e,\leq t_e}\geq 1$, all the vertices $v\in e$ have been scheduled in $\tau$. 

For this problem, we apply a different kernel,
\[K(t,t')=\frac{\alpha t'}{t(t+1)}\mathds{1}[t'\leq t].\]
Similar to the GMSSC case, the following two lemmas suffice to yield the desired approximation.
\begin{lemma}
    \[c_z(e)\leq\frac{\alpha}{\alpha-1} c_x(e).\]
\end{lemma}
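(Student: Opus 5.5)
In the min latency setting every $v\in e$ satisfies $z_{v,<t}\geq z_{e,<t}$, so at every $t>t_e$ all vertices of $e$ have $z_{v,<t}\geq 1$ and are scheduled in $\tau$. Hence $p_t(e)=0$ for $t>t_e$ and
\[c_z(e)=t_e .\]
It therefore suffices to show $t_e\leq \frac{\alpha}{\alpha-1}c_x(e)$, where $t_e$ is the first time with $z_{e,\leq t_e}\geq 1$.

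\textbf{Step 1: compute $z_{e,\leq t}$ in closed form.} Exchanging the order of summation,
\[z_{e,\leq t}=\sum_{s\leq t}\frac{\alpha}{s(s+1)}\sum_{t'\leq s}t'x_{e,t'}=\alpha\sum_{t'\leq t}t'x_{e,t'}\sum_{s=t'}^{t}\Big(\frac1s-\frac1{s+1}\Big)=\alpha\sum_{t'\leq t}t'x_{e,t'}\Big(\frac1{t'}-\frac1{t+1}\Big),\]
which simplifies to
\[z_{e,\leq t}=\alpha\sum_{t'\leq t}x_{e,t'}\Big(1-\frac{t'}{t+1}\Big).\]
This telescoping is the point of the chosen kernel: it makes the expression linear and explicit.

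\textbf{Step 2: convert this into a bound on $t_e$.} Write $c_x(e)=\sum_{t'}t'x_{e,t'}$ and use $\sum_{t'}x_{e,t'}=1$. Extending the sum to all $t'$ only adds the terms with $t'>t$, each of which has $1-t'/(t+1)\leq 0$. So
\[z_{e,\leq t}\geq \alpha\Big(1-\frac{c_x(e)}{t+1}\Big).\]
The right-hand side is at least $1$ as soon as $t+1\geq \frac{\alpha}{\alpha-1}c_x(e)$. Hence
\[t_e\leq \Big\lceil \tfrac{\alpha}{\alpha-1}c_x(e)\Big\rceil -1\leq \tfrac{\alpha}{\alpha-1}c_x(e).\]

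\textbf{Main obstacle: the discretization off-by-one.} Using $t+1$ rather than $t$ in the bound is exactly what absorbs the rounding up to an integer time. I will also check the boundary convention $z_{e,\leq t}$ versus $z_{e,<t}$, together with the constraint $u_{e,t}\leq 1$, i.e.\ $x_{e,t}$ summing to $1$ with $c_x(e)=\sum_t t\,x_{e,t}$. Nothing beyond the telescoping identity and linearity is required: no convexity or tail bounds, because $p_t(e)$ vanishes after $t_e$.
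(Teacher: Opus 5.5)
Your proof is correct, and it takes a more direct route than the paper. The paper derives the same telescoped form $z_{e,<t}=\alpha\sum_{t'\le t}\frac{t-t'}{t}x_{e,t'}$ and defines a (possibly fractional) time $t^*$ with $z_{e,<t^*}=1$, so that $t_e\le t^*$. It then maximizes $t^*/\sum_t t\,a_t$ over distributions $a$ using an extreme-point argument. At most two coordinates $a_u,a_v$ are nonzero. If both lie before $t^*$, the constraint forces $su+(1-s)v=t^*(1-1/\alpha)$, which gives exactly $\alpha/(\alpha-1)$. If $v>t^*$, the denominator increases in $v$, so $v$ is pushed back to $t^*$.

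You replace this reduction with a single inequality. The weight $1-t'/(t+1)$ is affine in $t'$ and nonpositive for $t'\ge t+1$, so extending the truncated sum to all $t'$ can only lower it, and by linearity the full sum equals $1-c_x(e)/(t+1)$. This is essentially a dual certificate for the paper's optimization problem. The paper's first case is the equality case of your inequality, and its second case (mass after $t^*$) is exactly what the sign of the weight disposes of.

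Your version needs no extreme-point or case analysis. It works directly on the integer time grid, so the off-by-one is handled explicitly rather than glossed over. It even yields the strict bound $t_e<\frac{\alpha}{\alpha-1}c_x(e)$. The paper's version has the merit of reusing the same maximize-over-extreme-points template as its GMSSC analysis. There the numerator contains convex tail terms, and no such linear shortcut is available.
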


\begin{proof}
    Under the kernel $K=\alpha t'/{t(t+1)}$, the relationship between $x$ and $z$ is
    \begin{align*}
        z_{e,<t}&=\sum_{t'\leq t}\sum_{t''\leq t'}K(t',t'')x_{v,t''}=\sum_{t''<t}\sum_{t'=t''}^{t-1}K(t,t'')=\alpha\sum_{t'=1}^t\frac{t-t'}{t}x_{e,t'},
    \end{align*}
    where we use $\sum_{q=t''}^{t-1}K(q,t'')=\alpha t''\sum_{q=t''}^{t-1}(1/q-1/(q+1))=\alpha(1-t''/t)$.

    Define $t^*$ as the time that $\alpha\sum_{t'=1}^{t^*}((t^*-t')/t^*)x_{e,t'}=1.$
    Again, $z_{e,<t^*}=1$ indicates $t^*\geq t_e$. Since $c_z(e)\leq t_e$, the corresponding maximum problem becomes
    \begin{align}
        \label{eq:MLC}
        &\text{maximize}\quad\frac{t^*}{\sum_{t}ta_{t}}, \quad \text{s.t.}\quad \pnorm{a}{1}=1,~\alpha\sum_{t\leq t^*}a_t\frac{t^*-t}{t^*}=1.\notag
    \end{align}

    This problem has two non-trivial constraints, implying it has at most 2 non-zero variables. Let $a_u=s$ and $a_v=1-s$. 
    \begin{enumerate}
        \item $u<v\leq t^*$, the problem is
        \begin{gather*}
        \text{Minimize}\quad su+(1-s)v,\\
        \text{s.t.}\quad s\frac{t^*-u}{t^*}+(1-s)\frac{t^*-v}{t^*}=1-\frac{su+(1-s)v}{t^*}=\frac{1}{\alpha}.
        \end{gather*}
        In this case, $su+(1-s)v$ is a constant $(t^*)(1-1/\alpha)$. Hence, the ratio in~(\ref{eq:MLC}) is at most
        \[\frac{\alpha t^*}{t^*(\alpha-1)}= \frac{\alpha}{\alpha-1}.\]
        \item $u<t^*<v$, then the problem is,
        \[\text{Minimize}\quad su+(1-s)v\quad, \text{s.t.}\quad s\frac{t^*-u}{t^*}=\frac{1}{\alpha},\]
        This minimization problem is simply an increasing function of $v$, which leads to $v=t^*$.
    \end{enumerate}
\end{proof}
\begin{lemma}
    \[\E[{\Cov_{\sigma}(e)}]\leq \frac{\alpha}{2}c_z(e).\]
\end{lemma}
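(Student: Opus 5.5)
The plan is to bound the position of $e$ in $\sigma$ by a simple deterministic count and then bound that count using the per-slot mass of $z$. The point of the new kernel is that it spreads at most $\alpha/2$ units of fractional mass per time slot, rather than $\beta$ as in Lemma~\ref{lemma:E_cov_sigma_and_E_cov_tau}.

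\textbf{Step 1 (mass per slot).} Apply the kernel to constraint~(\ref{eqn:cons2}). For every $t$,
\[
\sum_v z_{v,t}=\sum_{t'\le t}\frac{\alpha t'}{t(t+1)}\sum_v x_{v,t'}\le \frac{\alpha}{t(t+1)}\sum_{t'=1}^{t}t'=\frac{\alpha}{2}.
\]
Summing over $t\le T$ gives
\[
\sum_v z_{v,<T+1}\le \frac{\alpha T}{2}
\]
for every $T$.

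\textbf{Step 2 (all of $e$ is done in $\tau$ by $t_e+1$).} In the min latency setting, the kernelized KC inequalities read $z_{v,<t}\ge z_{e,<t}$ for all $v\in e$. Recall that $t_e$ is the earliest time with $z_{e,\le t_e}=z_{e,<t_e+1}\ge 1$. Hence $z_{v,<t_e+1}\ge 1\ge \alpha_v$ for every $v\in e$, and by the definition of $\tau$ each such $v$ has $\tau_v\le t_e+1$ deterministically. Here I use $c_z(e)\ge t_e$; in this setting $p_t(e)=0$ for $t>t_e$, so in fact $c_z(e)=t_e$.

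\textbf{Step 3 (counting in $\sigma$).} The ordering $\sigma$ lists vertices by nondecreasing $\tau$, so the last vertex of $e$ appears at a position no larger than the number of vertices $w$ with $\tau_w\le t_e+1$. This bound holds whatever the tie-breaking, so random tie-breaking is not even needed here. For any vertex $w$, $\tau_w\le t_e+1$ holds iff $\alpha_w\le z_{w,<t_e+1}$, which has probability $\min(1,z_{w,<t_e+1})$. By linearity of expectation and Step 1,
\[
\E[\Cov_\sigma(e)]\le \sum_w \min(1,z_{w,<t_e+1})\le \sum_{t\le t_e}\sum_w z_{w,t}\le \frac{\alpha}{2}t_e\le \frac{\alpha}{2}c_z(e).
\]

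\textbf{Main obstacle.} The only delicate point is the off-by-one bookkeeping between "$z_{e,\le t_e}\ge 1$", the strict-inequality convention $z_{v,<t}$, and the definition of $\tau_v$ as the earliest $t$ with $z_{v,<t}\ge\alpha_v$. I need both the count of $e$'s own vertices and the count of other vertices to use exactly $t_e$ slots of mass, which is why I index everything by $\tau\le t_e+1$ and $z_{\cdot,<t_e+1}$. If this indexing were off by one, the bound would acquire an additive $\alpha/2$. That term is still harmless, since $c_x(e)\ge 1$ and it could be absorbed, but I would check the convention first. Combined with the previous lemma, choosing $\alpha=2$ gives a ratio of $\frac{\alpha}{2}\cdot\frac{\alpha}{\alpha-1}=2$.
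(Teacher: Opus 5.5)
Your proof is correct and follows essentially the same route as the paper. You apply the kernel to $\sum_v x_{v,t}\le 1$ to get per-slot mass at most $\alpha/2$. You then use the kernelized KC inequalities $z_{v,<t}\ge z_{e,<t}$ to place all of $e$ in $\tau$ once $z_{e,\le t_e}\ge 1$, and count the vertices scheduled up to that point. Your version is more careful than the paper's one-line argument: the paper asserts that $\sigma$ schedules at most $(\alpha/2)t_e$ vertices by time $t_e$ without noting that this only holds in expectation, whereas your linearity-of-expectation step and the explicit $\tau_w\le t_e+1$ indexing make that bound precise.
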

\begin{proof}
    Apply $K$ to (\ref{eqn:cons2}), we have
    \[\sum_{v}z_{v}=\sum_v Kx_{v}\leq K\mathbf{1}.\]
    Furthermore, 
    \[(K\mathbf{1})_t=\sum_{t'}K(t,t')=\sum_{t'\leq t}\frac{\alpha t'}{t(t+1)}=\frac{\alpha}{t(t+1)}\cdot\frac{t(t+1)}{2}=\frac{\alpha}{2}.\]
    Hence, by time $t_e$, $\sigma$ schedules at most $(\alpha/2)t_e$ vertices and must include all the vertices $v\in e$, then
    \[\E[\Cov_\sigma(e)]\leq\frac{\alpha}{2}t_e=\frac{\alpha}{2}c_z(e).\]
\end{proof}

\second*
\begin{proof}
Combining the above two lemmas, we obtain the ratio for the min latency set cover problem, 
\[\frac{\alpha^2}{2(\alpha-1)}, \alpha\geq2,\]
which attains the minimum value $2$ at $\alpha=2$.
\end{proof}
\begin{remark}
    Theorem~\ref{thm:main2}, along with the $4$-approximation algorithm for MSSC, which is produced by the same algorithm with a different kernel, indicates that achieving a $4$-approximation for GMSSC within this framework using a universal kernel to all vertices is difficult.
\end{remark}

\section{A Tail Bound for the Sum of Bernoulli Random Variables}
\label{section:tailbound}
Now we prove Theorem~\ref{thm:newtailbound}.
\tail*

\begin{proof}
    Rewrite Lemma~\ref{lemma:Hoeffdingbound} as,
    \begin{align*}
        \Pr[S\leq k-1]\leq \sum_{i=0}^{k-1}\binom{n}{i}\left(\frac{kx}{n}\right)^i\left(1-\frac{kx}{n}\right)^{n-i},\quad\text{if $x\geq 1$}.
    \end{align*}
    \begin{lemma}[\cite{BansalBFT21} Lemma 29]
    For $\lambda=np=\E[S],$
    \[\sum_{i=0}^{k-1}\binom{n}{i}p^i(1-p)^{n-i}\leq e^{-\lambda}\sum_{i=0}^{k-1}\frac{\lambda^i}{i!}.\]
    \end{lemma}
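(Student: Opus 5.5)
The plan is to pass from the binomial to the Poisson distribution through a chain of binomials with the same mean $\lambda=np$ and an increasing number of trials. Concretely, I will show that for every integer $m\geq n$ with $m\geq\lambda$,
\[
\Pr\bigl[\mathrm{Bin}(m,\lambda/m)\leq k-1\bigr]\;\leq\;\Pr\bigl[\mathrm{Bin}(m+1,\lambda/(m+1))\leq k-1\bigr].
\]
Iterating this from $m=n$ upward, and using the classical pointwise convergence $\mathrm{Bin}(m,\lambda/m)\to\mathrm{Poi}(\lambda)$ as $m\to\infty$, gives
\[
\sum_{i=0}^{k-1}\binom{n}{i}p^i(1-p)^{n-i}\leq \lim_{m\to\infty}\Pr[\mathrm{Bin}(m,\lambda/m)\leq k-1]=e^{-\lambda}\sum_{i=0}^{k-1}\frac{\lambda^i}{i!}.
\]
This limit is a finite sum of convergent terms, so passing to it is routine.

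The lemma is only used in the regime $x\geq 1$, i.e.\ $\lambda=kx\geq k$, so $c:=k-1\leq\lambda-1$. I will carry this hypothesis throughout, since the inequality is false without some such condition.

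For the one-step comparison, view $\mathrm{Bin}(m,\lambda/m)$ as a sum of $m+1$ independent Bernoulli variables with success probabilities $(\lambda/m,\dots,\lambda/m,0)$. This sum has mean $\lambda$, as does $\mathrm{Bin}(m+1,\lambda/(m+1))$. Hoeffding's theorem~\cite{Hoeffding56} (the general, non-identical form behind Lemma~\ref{lemma:Hoeffdingbound}) states that among all sums of $N$ independent Bernoullis with fixed mean $\lambda$, the quantity $\Pr[S\leq c]$ with $c\leq\lambda-1$ is maximized when all probabilities are equal. Applying it with $N=m+1$ gives the displayed step immediately.

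The main obstacle is that Lemma~\ref{lemma:Hoeffdingbound} is stated in the paper only for identically distributed summands, whereas I need the heterogeneous version. If I cannot simply cite Hoeffding's Theorem~4, I will prove the needed equalization step directly. Take two of the probabilities, $p_1\neq p_2$, and let $R$ be the sum of the remaining variables. Then
\[
\Pr[S\leq c]=\Pr[R\leq c]-(p_1+p_2)\Pr[R=c]+p_1p_2\Pr[R=c-1].
\]
Replacing $(p_1,p_2)$ by their average keeps the mean and raises $p_1p_2$ by $(p_1-p_2)^2/4$. It therefore changes $\Pr[S\leq c]$ by $\tfrac{(p_1-p_2)^2}{4}\Pr[R=c-1]\geq 0$, and the sign is right without any condition on $c$.

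In the only case I actually need, all probabilities except one zero are equal to $\lambda/m$. Repeated averaging of this kind converges to the uniform vector $(\lambda/(m+1),\dots,\lambda/(m+1))$. Since $\Pr[S\leq c]$ is continuous in the probabilities and never decreases along the way, this yields the desired monotone step. If continuity and convergence of the averaging process prove awkward to justify, the fallback is to compare the two binomial CDFs directly by differentiating in $m$, treating $m$ as a real parameter.
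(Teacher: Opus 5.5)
The paper does not prove this lemma; it imports it from \cite{BansalBFT21}. Your main route is a valid proof. You pad $\mathrm{Bin}(m,\lambda/m)$ with one zero-probability trial, apply Hoeffding's Theorem~4 from \cite{Hoeffding56} with $N=m+1$ and $c=k-1\le\lambda-1$, and pass to the Poisson limit. Here you need the non-identical version of Hoeffding's theorem, which is what Lemma~\ref{lemma:Hoeffdingbound} is evidently meant to state. You are also right that the hypothesis $k\le\lambda$ is indispensable: for $k-1\ge n$ the left side equals $1$. That hypothesis is exactly what holds where the paper uses the lemma, namely $x\ge 1$. The chain is not even needed: pad once with $M-n$ zeros, compare with $\mathrm{Bin}(M,\lambda/M)$, and let $M\to\infty$.

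The self-contained fallback, however, is wrong as written. The correct expansion is
\[
\Pr[S\le c]=\Pr[R\le c]-(p_1+p_2)\Pr[R=c]+p_1p_2\bigl(\Pr[R=c]-\Pr[R=c-1]\bigr).
\]
So averaging $(p_1,p_2)$ changes $\Pr[S\le c]$ by $\tfrac{(p_1-p_2)^2}{4}\bigl(\Pr[R=c]-\Pr[R=c-1]\bigr)$, and the sign depends on $c$. It has to: summed over $c\ge 0$ these changes telescope to $0$, as they must since $\E[S]=\sum_{c\ge 0}\Pr[S>c]$ is unchanged. A change that is nonnegative for every $c$ would also contradict your own remark that the inequality fails for large $c$.

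Moreover, a single averaging step can lower $\Pr[S\le c]$ even when $c\le\lambda-1$. Take $(p_1,p_2,p_3)=(1,0.6,0.4)$, $\lambda=2$, $c=1$: averaging the first two coordinates takes $\Pr[S\le 1]$ from $0.24$ to $0.232$. So ``never decreases along the way'' is not automatic. You would have to check $\Pr[R=c]\ge\Pr[R=c-1]$ at every step of a specified averaging sequence, and that is the delicate part of Hoeffding's theorem.

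If you want the one-step comparison without citing Hoeffding, use a continuous transfer instead of averaging. Set $p_i(\theta)=(\lambda-\theta)/m$ for $i\le m$ and $p_{m+1}(\theta)=\theta$, with $\theta\in[0,\lambda/(m+1)]$; this interpolates between the padded $\mathrm{Bin}(m,\lambda/m)$ and $\mathrm{Bin}(m+1,\lambda/(m+1))$. We have $\partial\Pr[S\le c]/\partial p_i=-\Pr[S-X_i=c]$, and the first $m$ coordinates are symmetric, so
\[
\frac{d}{d\theta}\Pr[S\le c]=\Bigl(\frac{\lambda-\theta}{m}-\theta\Bigr)\bigl(\Pr[R=c]-\Pr[R=c-1]\bigr),\qquad R\sim\mathrm{Bin}\Bigl(m-1,\frac{\lambda-\theta}{m}\Bigr).
\]
The first factor is nonnegative on this interval. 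For $R\sim\mathrm{Bin}(m-1,q)$ with $0\le c\le m-1$, one has $\Pr[R=c]\ge\Pr[R=c-1]$ whenever $c\le mq=\lambda-\theta$. That follows from $c\le\lambda-1$ and $\theta\le 1$, and it is exactly where your hypothesis enters.
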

    This lemma is a stronger version of the Poisson Limit Theorem, and we use it to approximate the sum of binomial distributions. When $k=1$ this lemma directly gives $e^{-x}$ and the theorem is valid. Hence, it suffices to prove the following claim:
    \begin{claim}
    For all $k\geq 2$, and $x\geq1$,
        \[\exp\left({-\sqrt{k}\left(x-\frac{k-1}{k}\right)}\right)\geq e^{-kx}\sum_{i=0}^{k-1}\frac{(kx)^i}{i!}.\]
    \end{claim}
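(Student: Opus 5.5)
We will reduce the claim to a statement about the Poisson lower tail. Set $\lambda=kx\ge k$ and $F_k(\lambda)=e^{-\lambda}\sum_{i=0}^{k-1}\lambda^i/i!$. Define
\[
G(x)=\ln F_k(kx)+\sqrt{k}\Bigl(x-\tfrac{k-1}{k}\Bigr),\qquad x\ge 1 .
\]
The claim is equivalent to $G(x)\le 0$ for all $x\ge 1$. Two standard facts will drive the argument.

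The first is the identity $\frac{d}{d\lambda}F_k(\lambda)=-e^{-\lambda}\lambda^{k-1}/(k-1)!=: -\pi_{k-1}(\lambda)$. The second is the ratio representation
\[
\rho(\lambda):=\frac{F_k(\lambda)}{\pi_{k-1}(\lambda)}=\sum_{m=0}^{k-1}\prod_{l=0}^{m-1}\frac{k-1-l}{\lambda}.
\]
Every term of this sum is non-increasing in $\lambda$. Together they give
\[
G'(x)=\sqrt{k}-\frac{k}{\rho(kx)},
\]
and since $\rho$ is non-increasing, $G'$ is non-increasing, so $G$ is concave on $[1,\infty)$. Also $G(x)\to-\infty$ as $x\to\infty$, because $\ln F_k(kx)=-kx+O(k\ln(kx))$ and $\sqrt{k}<k$. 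Hence either $G'(1)\le 0$, in which case it suffices to check $G(1)\le 0$, or $G$ has a unique maximizer $x^*>1$ characterized by $\rho(kx^*)=\sqrt{k}$.

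At the point $x=1$ we need $F_k(k)\le e^{-1/\sqrt{k}}$. For small $k$ (say $k\le K_0$) we will verify this numerically; for example, $F_2(2)=3e^{-2}\approx0.406<e^{-1/\sqrt2}\approx0.493$. For large $k$ we will use the classical bound $F_k(k)\le \tfrac12+\frac{c}{\sqrt{k}}$ (Poisson median, Ramanujan-type expansion), which is eventually below $e^{-1/\sqrt k}\approx 1-1/\sqrt k$.

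At the interior maximizer we have $F_k(kx^*)=\sqrt{k}\,\pi_{k-1}(kx^*)$, and therefore
\[
G(x^*)=\ln\sqrt{k}+\ln\pi_{k-1}(kx^*)+\sqrt{k}\,(x^*-1)+\tfrac{1}{\sqrt k}.
\]
Writing $x^*=1+s/\sqrt k$ and using Stirling, $\pi_{k-1}(kx)\le \frac{1}{\sqrt{2\pi(k-1)}}\exp\bigl(-(k-1)h(\tfrac{kx}{k-1})-1\bigr)$ with $h(y)=y-1-\ln y$. This should give roughly $G(x^*)\lesssim -\tfrac12\ln(2\pi)-\tfrac{s^2}{2}+s+O(1/\sqrt k)$. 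The quadratic $-\tfrac{s^2}{2}+s$ is at most $\tfrac12<\tfrac12\ln(2\pi)\approx0.919$, leaving a margin that absorbs the error terms for $k\ge K_0$.

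To make this rigorous we only need an upper bound on $\pi_{k-1}$ (an explicit Stirling inequality) together with control of $s$. For the latter we locate $x^*$ by bounding $\rho$: the estimate $\rho(kx)\le\sum_m x^{-m}e^{-m(m+1)/(2k)}$, combined with a matching lower bound, shows $s=O(1)$. We then handle the remaining finitely many $k<K_0$ by direct evaluation of the concave function $G$.

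I expect the main obstacle to be this interior-maximum step. The naive approach of forcing $G'(1)\le 0$ fails: $\rho(k)\approx\sqrt{\pi k/2}>\sqrt k$, so $G$ genuinely increases just to the right of $x=1$. The proof therefore must use the quantitative Stirling margin $\tfrac12\ln(2\pi)>\tfrac12$, with explicit constants, to close the gap uniformly in $k$. If the constants prove too tight for moderate $k$, I would enlarge $K_0$ and verify more cases numerically.
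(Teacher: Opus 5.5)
Your proposal is correct in outline, and its core is the paper's argument. Your $G$ is exactly $\ln R$ for the paper's ratio $R$, and your critical-point condition $\rho(kx^*)=\sqrt{k}$ is the paper's $R'(x_0)=0$. Using that condition to replace the Poisson sum by the single term $\sqrt{k}\,\pi_{k-1}(kx^*)$, and then applying Stirling, is exactly what the paper does. Where you diverge, the paper's choices are lighter.

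First, you do not need to locate $x^*$ or control $s$. After the elimination, $G(x^*)\le\sup_{\lambda>0}\bigl[\ln\sqrt{k}+\ln\pi_{k-1}(\lambda)+(\lambda-k)/\sqrt{k}+1/\sqrt{k}\bigr]$ wherever $x^*$ lies. The bracket is concave in $\lambda$ with explicit maximizer $\lambda=k+\sqrt{k}$; this is your $s=1$ and the paper's $t_0=(k-1)/(k-\sqrt{k})$. Stirling plus $(k-1)\ln\frac{\sqrt{k}}{\sqrt{k}-1}\le\sqrt{k}+\frac12$ then gives the closed-form bound $e^{1/2+1/\sqrt{k}}\sqrt{k/(2\pi(k-1))}$. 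This is below $1$ for $k\ge 9$, leaving only $2\le k\le 8$ for numerical checking, with no Taylor remainders and no tuning of $K_0$.

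Second, the endpoint check at $x=1$ via the Poisson median/Ramanujan expansion is avoidable. The paper proves the inequality on the larger range $[\eta,\infty)$ with $\eta=(k-1)/k$. There the endpoint value is $e^{-(k-1)}\sum_{i<k}(k-1)^i/i!\le 1$, directly from the Maclaurin series of $e^{k-1}$. Your own estimate $\rho(k)\approx\sqrt{\pi k/2}>\sqrt{k}$ also suggests that your branch $G'(1)\le 0$ never occurs.

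Third, your concavity of $G$ (log-concavity of the Poisson CDF in $\lambda$) is correct and gives uniqueness of $x^*$, but it is not needed. Since $G\to-\infty$, the supremum is attained at the endpoint or at a critical point, and the elimination bounds every critical point.

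One slip to fix: your Stirling display should read $\pi_{k-1}(\lambda)\le(2\pi(k-1))^{-1/2}\exp\bigl(-(k-1)h(\lambda/(k-1))\bigr)$, with no extra $-1$ in the exponent. As written it is too small by a factor of $e$ and hence false: at $k=2$, $\lambda=2$ it gives about $0.108$, while $\pi_1(2)=2e^{-2}\approx 0.271$. Your subsequent heuristic $-\frac12\ln(2\pi)-\frac{s^2}{2}+s$ is what the correct version produces, so the error does not propagate.
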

    \begin{proof}
    Define $\eta=(k-1)/k$ and the following function, 
        \[R(x)=e^{-kx}\sum_{i=0}^{k-1}\frac{(kx)^i}{i!}\bigg/ e^{-\sqrt{k}(x-\frac{k-1}{k})}=e^{-kx+\sqrt{k}(x-\frac{k-1}{k})}\sum_{i=0}^{k-1}\frac{(kx)^i}{i!}.\]
    If we can prove that $R(x)\leq 1$ for all $x\geq \eta$, we finish the proof.  First, using the Maclaurin expansion of $e^x$, we have
        \[e^x\geq\sum_{i=0}^{k-1}\frac{x^i}{i!},\]
    Let $x=k-1$, we obtain $e^{k-1}\geq\sum_{i=0}^{k-1}(k-1)^i/i!$. Then at $x=\eta$,
        \[R(\eta)=e^{-(k-1)}\sum_{i=0}^{k-1}\frac{(k-1)^i}{i!}\leq 1.\]
    On the other hand, consider $R(\infty)$,
    \begin{align}
        \lim_{x\rightarrow\infty}R(x)=e^{-\frac{k-1}{\sqrt{k}}}\lim_{x\rightarrow\infty}\sum_{i=0}^{k-1}e^{(-k+\sqrt{k})x}\frac{(kx)^i}{i!}.
    \end{align}
    As $k\geq2$, $-k+\sqrt{k}<0$,
    \[\lim_{x\rightarrow\infty}\frac{(kx)^i}{i!\cdot e^{(k-\sqrt{k})x}}=0.\]
    Because $k$ is a constant, we conclude that $R(\infty)=0$. It remains to analyze the extreme point. Since $R(\eta)\leq 1$ and $R(\infty)<1$, we assume that there are some extreme points with the maximum value on $[\eta,\infty)$. Compute the derivative of $R(x)$,
        \begin{align*}
        R'(x)&=e^{-kx+\sqrt{k}(x-\frac{k-1}{k})}\left((-k+\sqrt{k})\left(\sum_{i=0}^{k-1}\frac{(kx)^i}{i!}\right)+k\sum_{i=0}^{k-2}\frac{(kx)^i}{i!}\right)\\
        &=\sqrt{k}\cdot e^{-kx+\sqrt{k}(x-\frac{k-1}{k})}\left(\frac{-\sqrt{k}(kx)^{k-1}}{(k-1)!}+\sum_{i=0}^{k-1}\frac{(kx)^i}{i!}\right).
        \end{align*}
    The extreme point $x_0$ satisfies
        \[\frac{\sqrt{k}(kx_0)^{k-1}}{(k-1)!}=\sum_{i=0}^{k-1}\frac{(kx_0)^i}{i!}.\]
    Hence, we can eliminate the summation in $R(x)$
    \begin{align}
    \label{eq:max_value_R}
        R(x)\leq R(x_0)=\frac{\sqrt{k}(kx_0)^{k-1}}{(k-1)!}e^{(\sqrt{k}-{k})x_0-\frac{k-1}{\sqrt{k}}}.
    \end{align}
    Consider~(\ref{eq:max_value_R}) as function of $x_0$, 
    \[S(t):=\frac{\sqrt{k}(kt)^{k-1}}{(k-1)!}e^{(\sqrt{k}-{k})t-\frac{k-1}{\sqrt{k}}}.\]
    The corresponding derivative is,
    \[S'(t)=\frac{\sqrt{k}}{(k-1)!}e^{-\frac{k-1}{\sqrt{k}}}e^{-kt+\sqrt{k}t}(kt)^{k-2}\left[(-k+\sqrt{k})kt+k(k-1)\right].\]
    The sign of $S'(t)$ is determined by $(-k+\sqrt{k})kt+k(k-1)$. Therefore, $S'(\eta)>0$ and $S'(\infty)<0$. The extreme point $t_0=(k-1)/(k-\sqrt{k})$ gives the maximum value.
    \begin{align}
    \label{eq:max_S}
        S(t_0)&=\left(\sqrt{k}\right)^{k}\frac{(\sqrt{k}+1)^{k-1}}{(k-1)!}e^{-\frac{k-1}{\sqrt{k}}-k+1}\\
        &=\sqrt{k}\left(\frac{\sqrt{k}}{\sqrt{k}-1}\right)^{k-1}\frac{(k-1)^{k-1}}{(k-1)!}e^{-(k-1)}e^{-\frac{k-1}{\sqrt{k}}}.\notag
    \end{align}
    Using the Stirling approximation $\sqrt{2\pi n}(n/e)^n\leq n!$, 
    \begin{align*}
        S(t_0)\leq \left(\frac{\sqrt{k}}{\sqrt{k}-1}\right)^{k-1}\sqrt{\frac{k}{2\pi(k-1)}}e^{-\sqrt{k}+\frac{1}{\sqrt{k}}}.
    \end{align*}
    Consider $(\sqrt{k}/(\sqrt{k}-1))^{k-1}$, substituting $r=1/\sqrt{k}$ and taking the logarithm,
    \begin{align*}
        (k-1)\ln{\left(\frac{\sqrt{k}}{\sqrt{k}-1}\right)}&=-\left(\frac{1}{r^2}-1\right)\ln(1-r)\\
        &=\left(\frac{1}{r^2}-1\right)\left(\sum_{j=1}^\infty\frac{r^j}{j}\right)\tag{Maclaurin series of $\ln(1-r)$}\\
        &=\left(\frac{1}{r}+\frac{1}{2}+\frac{r}{3}+\dots\right)-\left(r+\frac{r^2}{2}+\dots\right)\\
        &\leq \frac{1}{r}+\frac{1}{2}=\sqrt{k}+\frac{1}{2}.
    \end{align*}
    Therefore,
    \begin{align}
    \label{eq:max_S_approx}
        S(t_0)\leq e^{\frac{1}{2}+\frac{1}{\sqrt{k}}}\sqrt{\frac{k}{(k-1)2\pi}}.
    \end{align}
    This is a decreasing function of $k$. When $k\geq 9$, ~(\ref{eq:max_S_approx}) is less than 1. Meanwhile, we can verify that~(\ref{eq:max_S}) is also less than 1 for $2\leq k\leq 8$. Finally, we conclude that $R(x)\leq R(x_0)\leq S(t_0)\leq 1$ for all $k\geq 2$ and finish the proof.
    \end{proof}
\end{proof}
\
\bibliographystyle{plainurl}% the mandatory bibstyle
\bibliography{refs}

\end{document}